\RequirePackage{amsmath}
\documentclass{svjour3_arxiv}

\usepackage{graphicx}
\usepackage[hyphens]{url}  
\usepackage{amsmath,amsfonts}
\usepackage{mathtools}
\usepackage{xcolor}
\usepackage{tikz}
\usepackage{natbib}
\usepackage{hyperref}
\usepackage{lineno}



\usepackage{amssymb}
\usepackage{pifont}
\newcommand{\cmark}{\ding{51}}%
\newcommand{\xmark}{\ding{55}}%

\renewcommand{\d}{{\bf d}} 
\newcommand{\D}{{\bf D}} 
\newcommand{\DDD}{\mathcal{D}} 
\newcommand{\DD}{\mathfrak{D}} 
\newcommand{\C}{\mathcal{C}} 

\newcommand\bigutimes{\mathop{\ooalign{$\bigcup$\cr%
   \hfil\raise0.36ex\hbox{$\scriptscriptstyle\boldsymbol{\times}$}\hfil\cr}}}

\begin{document}

\title{United for Change: Deliberative Coalition Formation to Change the Status Quo\thanks{Earlier versions of this article have been accepted for presentation at the 35th AAAI Conference on Artificial Intelligence, AAAI-21~\citep{preliminary} and at the 8th International Workshop on Computational Social Choice, COMSOC-21. We wish to thank Schloss Dagstuhl---Leibniz Center for Informatics. The paper develops ideas 
discussed by the authors and other participants during the Dagstuhl Seminar 19381 (Application-Oriented Computational Social Choice), summer 2019. We would also like to thank the participants of the University of Bayreuth Colaform workshop and the Lorentz Center workshop on Modelling of Social Complexity in Argumentation for useful feedback on earlier versions of this work. We thank the generous support of the Braginsky Center for the Interface between Science and the Humanities.
Edith Elkind was supported by the ERC Starting Grant ACCORD (GA 639945).
Davide Grossi was supported by the \href{https://hybrid-intelligence-centre.nl}{Hybrid Intelligence Center}, a 10-year program funded by the Dutch
Ministry of Education, Culture and Science through the Netherlands
Organisation for Scientific Research (NWO).
Nimrod Talmon was supported by the Israel Science Foundation (ISF; Grant No. 630/19).}}


\author{
    Edith Elkind
    \and
    Davide Grossi
    \and
    Ehud Shapiro
    \and
    Nimrod Talmon
}


\institute{%
    Edith Elkind
    \at
    University of Oxford \\
    \email{elkind@cs.ox.ac.uk}
    \and
    Davide Grossi
    \at
    University of Groningen and University of Amsterdam \\
    \email{d.grossi@rug.nl}
    \and
    Ehud Shapiro
    \at
    Weizmann Institute of Science \\
    \email{ehud.shapiro@weizmann.ac.il}
    \and
    Nimrod Talmon
    \at
    Ben-Gurion University \\
    \email{talmonn@bgu.ac.il}
}

\date{}

\maketitle


\begin{abstract}
We study a setting in which a community wishes to identify a strongly supported proposal from a space of alternatives, in order to change the status quo.
We describe a deliberation process in which agents dynamically form coalitions 
around proposals that they prefer over the status quo. 
We formulate conditions on the space of proposals and on the ways in which coalitions are formed that guarantee deliberation to succeed, that is, to terminate by identifying a 
proposal with the largest possible support. 
Our results provide theoretical foundations for the analysis of deliberative processes such as the ones that take place
in online systems for democratic deliberation support.
%
\keywords{Computational Social Choice \and Reality-aware Social Choice \and Deliberation \and Coalition Formation}
\end{abstract}

\section{Introduction}



Social choice theory has provided us with a wealth of tools to design voting methods for democratic decision-making~\citep[see, e.g.,][]{comsoc-book}. However, besides voting, another important dimension of collective decision-making involves determining what should be put to vote in the first place. This aspect of democratic choice is referred to as the `right of initiative' in a parliamentary context, and 
has received considerably less attention in the computational social choice literature. 

Indeed, in practice, agents engaging in group decisions do not just vote 
on an externally determined set of choices: rather, they make proposals, 
deliberate over them, and join coalitions to push their proposals through.
Understanding deliberative processes is essential in order to provide a more comprehensive picture of democratic decision-making. This is important in the context of established democratic institutions, such as parliaments, but is 
especially critical for digital democracy applications~\citep{brill18interactive} 
that provide support for democratic decision-making in more informal, less structured communities.
Notable examples of this kind are the
LiquidFeedback\footnote{\url{https://liquidfeedback.org}} \citep{liquid_feedback} and the Polis\footnote{\url{https://pol.is/home}} platforms. 

In this paper, we aim 
to provide a parsimonious model of deliberative processes in self-governed (online) systems. 
We abstract away from the communication mechanisms by means of which deliberation may be 
concretely implemented, and focus instead on the coalitional effects of deliberation, that is, 
on how coalitions in support of various proposals 
may be formed or broken in the face of new suggestions. 
In our model, agents and alternatives are located in a metric space, 
and there is one distinguished alternative, which we refer to as the {\em status quo}. 
We assume that the number of alternatives 
is large (possibly infinite), so that the agents cannot be expected to rank the alternatives
or even list all alternatives they prefer to the status quo. Rather, during the deliberative 
process, some agents formulate new proposals, and then each agent can decide
whether she prefers a given proposal~$p$ to the status quo, i.e., whether $p$
is closer to her location than the status quo alternative; if so, she may join a coalition of
agents supporting $p$. This process is democratic in that each participant who is capable of formulating a new proposal is welcome to do so; furthermore, participants who do not have the 
time or sophistication to work out a proposal can still take part in 
the deliberation by choosing which coalition to join.

These coalitions are dynamic and change over time: an agent may move to another
coalition, or two coalitions may merge, possibly leaving some members behind. 
We assume that agents are (1) {\em consensus-seeking}, i.e., they aim to 
jointly identify a proposal that has large support, and (2) {\em myopic}, 
i.e., they make their moves without trying to predict their impact 
on other agents' behavior in the subsequent steps. As a consequence, 
in our model agents always aim to increase the size of the coalition that they are
part of, as long as this coalition supports a proposal they prefer to the status quo. 
In particular, they may move from a smaller coalition whose proposal
is close to their ideal point to a larger coalition whose proposal is further
away from their ideal point (as long as they still prefer the latter coalition's proposal
to the status quo).
We consider a deliberation to be successful if it identifies 
a most popular alternative to the status quo. 
Thus, our results should be interpreted as an attempt 
to map out what stands in the way of successful deliberation 
even when assuming that the agents' primary motivation is to reach a consensus. 

 
To flesh out the broad outline of the deliberative coalition formation process described above, 
one needs to specify rules that govern the dynamics of coalition formation. 
We explore several such rules, ranging from single-agent moves 
to more complex transitions where coalitions merge behind a new proposal, possibly
leaving some dissenting members behind. Our aim is to understand whether 
the agents can succeed at identifying
credible alternatives to the status quo if they conduct deliberation in a certain way.

We concentrate on how the properties of the underlying abstract space of proposals and the 
coalition formation operators available to the agents affect 
the success of the deliberative coalition formation process.
We show that, as the complexity of the proposal space increases, more sophisticated forms of coalition 
formation are required in order to assure success. Intuitively, this seems to suggest that 
complex deliberative spaces require more sophisticated coalition formation abilities on the side of the agents. 

\paragraph{Paper contribution}
We study five ways in which agents can form coalitions:
\begin{enumerate}

\item
by {\em deviation}, when a single agent from one coalition moves to another, weakly larger, coalition;  

\item
by {\em following}, when a coalition joins another 
coalition in supporting the second coalition's proposal; 

\item
by {\em merge}, when two coalitions join forces behind a new proposal; 

\item
by {\em compromise}, when agents belonging to two coalitions form a new larger coalition, possibly 
leaving dissenting agents behind;

\item 
by {\em multi-party compromise}, when agents belonging to possibly more than two coalitions form a new larger coalition, perhaps leaving dissenting agents behind.
\end{enumerate}
We refer to these types of coalition formation operations as {\em transitions}. Even though these transitions are all inspired by intuitive features of coalition formation in deliberation, we emphasise that we are not making empirical claims about the frequency or popularity of these transitions in real-world deliberations. Rather, they should be treated as abstract constructs that enable us to describe a hierarchy of coalition formation modes of increasing sophistication; this, in turn, puts us in a position to make claims to the effect that certain proposal spaces require more elaborate coalition formation procedures.

We show that, for each class of transitions, the deliberation process is guaranteed
to terminate; in fact, for single-agent, follow, and merge transitions
the number of steps until convergence is at most polynomial in the number of agents.
Furthermore, follow transitions are sufficient
for deliberation to succeed if the set of possible proposals is a subset of the $1$-dimensional
Euclidean space, but this is no longer the case in two or more dimensions; in contrast, 
compromise transitions are enough in sufficiently rich subsets of ${\mathbb R}^d$
for each $d\ge 1$. The `richness' condition, however, is essential:
  we provide an example where the space of proposals is a finite subset of ${\mathbb R}^3$, 
but compromise transitions are not capable of identifying a most popular proposal.

While our primary focus is on deliberation in Euclidean spaces, we also provide results for two classes of non-Euclidean spaces. The first one is the class of weighted trees, 
with the metric given by the standard path length. We show that, in these spaces, merge transitions suffice for successful deliberation. The second class of spaces arises naturally in combinatorial domains as well as in multiple referenda settings. These spaces, which we refer to as $d$-hypercubes, are defined by the set of all binary opinions on $d$ issues with the discrete metric known as the Hamming, or Manhattan, distance. We show that deliberative success in $d$-hypercubes requires multi-party compromises involving at least $d$ coalitions, and possibly more. In particular, we prove that $d$-compromises guarantee success if  $d\le 3$, but when $d = 4$ we may need $5$-compromises.
Table~\ref{table:results} provides an overview of our findings, with pointers to the relevant results in the paper.

\begin{table}[t]
	\centering
	\resizebox{\linewidth}{!}{
		\begin{tabular}{l l l l l l l} 
		    Transition
			& Termination
			& $1$-Euclidean
			& $2$-Euclidean
			& $d$-Euclidean
			& Trees
			& $d$-Hypercubes
			\\ \hline
			Single-ag.
			& $ n^2$~(Pr.~\ref{prop:single-terminates})
			& \xmark~(Ex.~\ref{ex:single-bad})
			& \xmark
			& \xmark
			& \xmark
			& \xmark
			\\
			Follow
			& $ n^2$~(Pr.~\ref{prop:follow-terminates})
			& \cmark~(Th.~\ref{thm:follow-success-1D})
			& \xmark~(Ex.~\ref{ex:2d})
			& \xmark
			& \xmark
			& \xmark
			\\
			Merge
			& $ n^2$~(Pr.~\ref{prop:merge-terminates})
			& \cmark~(Th.~\ref{thm:follow-success-1D})
			& \xmark~(Ex.~\ref{ex:needcomp})
			& \xmark
			& \cmark~(Th.~\ref{thm:merge-success-tree})
			& \xmark
			\\
			Compr.
			& $n^n$~(Pr.~\ref{prop:lex})
			& \cmark
			& \cmark
			& \cmark~(Th.~\ref{thm:success})
			& \cmark
			& \xmark~(Ex.~\ref{ex:3hyper}) 
			\\
			$d$-Compr.
			& $n^n$~(Co.~\ref{cor:lex})
			& \cmark
			& \cmark
			& \cmark
			& \cmark
			& \cmark~($d \leq 3$, Pr.~\ref{prop:hdone}, Pr.~\ref{prop:34hyper})
	\end{tabular}
	}
	\caption{Summary of our main results. For each transition type, the respective row shows bounds on the length of deliberation, and whether success is guaranteed for various metric spaces.}
	\label{table:results}
\end{table}


We view our work as an important step towards modeling a form of pre-vote deliberation 
usually not studied within the social choice literature: how voters can identify 
proposals with large support. We believe that our formal model provides the basis 
for a framework within which systems designed to support democratic deliberation 
could be modeled and analyzed. In particular, it captures a highly desirable common feature 
of such systems, including, e.g., LiquidFeedback and Polis: 
enabling participants to come up with novel proposals and to submit them for the approval 
(or disapproval) of other participants. Only proposals commanding large-enough support would then 
be considered as options to be voted upon in a formal voting phase. 
Hence, participants are nudged towards identifying positions that are mutually beneficial 
(see also the work of~\cite{pietro16system}) and can therefore elicit the approval of larger groups. 
Our theory aims to make the first steps towards providing analytical foundations for such systems.


\paragraph{Structure of the paper} 
After discussing related work (Section~\ref{section:relatedwork}), we describe our formal model in Section~\ref{section:process}.
Then, we concentrate on Euclidean deliberation spaces and study the power of different deliberation operations:
  specifically,
  in Section~\ref{section:singleagent}, we consider single-agent transitions;
  in Section~\ref{section:follow} we consider follow transitions;
  in Section~\ref{section:merge} we consider merge transitions 
     (also in the context of weighted trees);
  in Section~\ref{section:compromise} we consider compromise transitions.
Finally, in Section~\ref{section:hypercube}, we study deliberation in hypercubes and multi-party compromise transitions. We conclude in Section~\ref{sec:concl}.


\section{Related Work}\label{section:relatedwork}

\paragraph{Deliberation.} Group deliberation has been an object of research in several disciplines, 
from economics to political theory and artificial intelligence. 
A wealth of different approaches to deliberation can be identified, reflecting the complexity of the concept.
\citet{list11group} has focused on axiomatic approaches to deliberation, viewed as an opinion transformation function. Experimental work---relying on either lab experiments \citep{list13deliberation} or simulations \citep{roy20deliberation}---has then tried to assess the effects of deliberations on individual opinions (e.g., whether deliberation facilitates the set of opinions becoming single-peaked).
Deliberation has also been approached from a game-theoretic perspective \citep{landa09game}. With the tools of game theory, deliberation has been studied from a number of angles: as a process involving the exchange of arguments for or against positions \citep{chung20formal,patty08arguments}; as a process of persuasion \citep{glazer01debates,glazer04optimal,glazer06study}; as the pre-processing of inputs for voting mechanisms \citep{austen-smith05deliberation,perote15model,karanikolas2019voting}; as a mechanism enabling preference discovery \citep{hafer07deliberation}. Deliberation has also been studied as a distributed protocol for large-scale decision-making involving only sequential local interaction in small groups, with a focus on algorithmic complexity \citep{goel2016towards,fain2017sequential}.

Our work sets itself somewhat apart from the above literature by focusing exclusively on the consensus-seeking aspect of deliberation. Deliberation is studied here as a distributed process whereby widely supported proposals can be identified in a decentralized way. In doing so, we abstract away from: the concrete interaction mechanisms (e.g., argumentation) that agents may use to communicate and assess
proposals; strategic issues that may arise during communication; as well as how deliberation might interact with specific voting rules. Instead, we concentrate on the results of such interactions, as manifested by changes in the 
structure of coalitions supporting different proposals. 
This focus also sets our work apart from  influential opinion dynamics models (e.g., \citep{degroot74reaching}) where deliberation is driven by social influence rather than by coalition formation via, for instance, compromise.

Importantly, we abstract away from strategic issues that agents may have to  confront when deciding 
to join or leave coalitions; this differentiates our work from the related literature on dynamic 
coalition formation \citep{arnold02dynamic,chalkiadakis08sequential}.
In particular, \citet{chalkiadakis08sequential} concentrate on uncertainties that the agents may experience; and \citet{arnold02dynamic} focus on agents that receive payoffs, which depend on the coalitions they are in.

\paragraph{Spatial voting.} From a technical point of view, our contribution is closely related to work on spatial voting that stems from the Hotelling model \citep{hotelling29stability}. Many social choice settings are naturally embedded in a metric space,
and there is a large literature that considers preference aggregation
and coalition formation in such scenarios \citep{coombs,MG99,devries99governing}.
In this context we mention, in particular, the work of \citet{metricspaces}, which considers a framework for voting and proposing, instantiated to several metric spaces. It includes an agent population where each agent is associated with an ideal point. This framework models a broad range of social choice settings, and is intended to be used for deliberative decision-making. 
\citet{metricspaces} evaluate
several general aggregation methods for their framework 
based on their computational and normative properties. Our work extends the framework of \citet{metricspaces} by accommodating the processes of coalition formation.


\paragraph{Coalition formation.} 
Since we model deliberation as the evolution of coalition structures in response to 
deviations by groups of agents, our framework is closely related to that of hedonic games 
\citep{aziz16handbook}. Hedonic games offer a rich framework for studying 
stability properties of coalition structures where agents have preferences 
over coalitions that they can be part of; here, stability is 
interpreted as the absence of profitable deviations, by individual agents or groups of 
agents. In a similar spirit, we are also concerned with stability of coalition structures 
that arise in a chain of deviations. Indeed, some of the transitions types that we investigate
are inspired by stability concepts developed in the hedonic games literature
(e.g., single-agent transitions and $\ell$-compromise transitions are closely related
to the deviations underpinning the concepts of, respectively, 
Nash stability and core stability in hedonic games). 
However, a key difference is that in our framework each 
coalition is associated with a specific proposal, and the choice of proposal is an important 
factor in determining which agents would like to join/abandon this coalition. In this regard, 
our setting is closer to that of group activity selection \citep{gasp}, where agents
form groups to engage in activities. However, to the best 
of our knowledge, there is no existing work on group activity selection with metric 
preferences, and some of the transition types we consider (such as merge, follow and 
subsume transitions) have not received
much attention in the context of group activity selection. 


\section{Formal Model}\label{section:process}
We view deliberation as a process in which agents aim to find an alternative 
preferred over the status quo by as many agents as possible.
Thus, we assume a (possibly infinite) domain $X$ of {\em alternatives}, or {\em proposals}, 
which includes the {\em status quo}, or \emph{reality}, $r \in X$. We also assume 
a set $V = \{v_1, \ldots, v_n\}$ of $n$ {\em agents}. 
For each proposal $x\in X$, an agent $v$ is able to articulate whether 
she (strictly) prefers $x$ over the status quo $r$ (denoted as $x >_v r$); when $x >_v r$ we say that $v$ {\em approves}~$x$. For each $v\in V$, 
let $X^v=\{x\in X: x>_v r\}$; the set $X^v$ is the \emph{approval set} of $v$. 
Given a subset of agents $C \subseteq V$ and a proposal $p\in X$, 
let furthermore $C^p := \{v \in C : p >_v r\}$. Then, the agents in $C^p$ are the \emph{approvers} of $p$ in $C$.

Throughout this paper, in order to represent agents' views on a possibly infinite set of proposals, we focus on the setting where $X$ and $V$ are contained in a metric space $(M, \rho)$, i.e., (1) $X, V\subseteq M$,
(2) the mapping $\rho: M\times M\to{\mathbb R}^+\cup\{0\}$ satisfies (i) $\rho(x, y)=0$
if and only if $x=y$, (ii) $\rho(x, y)=\rho(y, x)$, and (iii) $\rho(x, y)+\rho(y, z)\le \rho(x, z)$
for all $x, y, z\in M$, and
(3) for every $x\in X$ and every $v\in V$ we have $x>_v r$ if and only if $\rho(v, x)<\rho(v, r)$. E.g., if $M = {\mathbb R}^2$
and the metric is the usual Euclidean metric in ${\mathbb R}^2$,
then the approval set of $v$ consists of all points in $X$ that are 
located inside the circle with center $v$
and radius $\rho(v, r)$ (see Figure~\ref{fig:ex1}), 
whereas the set of supporters of a proposal $p$ in $C$
consists of all agents $v\in C$ such that $\rho(v, p) < \rho(v, r)$.
Geometrically, consider the perpendicular bisector to $[p, r]$, i.e., 
the line $\ell$ that passes through the midpoint of the segment $[p, r]$
and is orthogonal to it. 
Then, $C^p$ consists of all agents $v\in C$ such that $v$ and $p$
lie on the same side of $\ell$.

Thus, an instance of our problem can be succinctly encoded by a $4$-tuple $(X, V, r, \rho)$;
we will refer to such tuples as {\em deliberation spaces}.

\begin{remark}
Note that we do not require that $X=M$. By allowing $X$ to be a proper subset of $M$ 
we can capture the case where the space of proposals is, e.g., a finite subset of ${\mathbb R}^d$
for some $d\ge 1$. Moreover, in our model it need not be the case that $V\subseteq X$, 
i.e., we do not assume that for each agent there exists a `perfect' proposal.
Furthermore, while for each agent $v$ the quantities $\rho(v, x)$ are well-defined
for each $x\in X$, we do not expect the agents to compare different proposals
based on distance; rather, the distance only determines which proposals
are viewed as acceptable (i.e., preferred to the status quo).
\end{remark}

Agents proceed by forming coalitions around proposals.
Thus, at each point in the deliberation, agents can be partitioned into coalitions, 
so that each coalition $C$ is identified with a proposal $p_C$ and all agents in $C$
support $p_C$. Agents may then move from one coalition to another as well as 
select a proposal from $X$ that is not associated with any of the existing 
coalitions and form a new coalition around it. We consider a variety 
of permissible moves, ranging from single-agent transitions 
(when an agent abandons her current coalition and joins a new one), 
to complex transitions that may involve agents from multiple 
coalitions and a new proposal.
In each case, we assume that agent~$v$ is unwilling
to join a coalition if this coalition advocates a proposal $p$ such that $v$
(weakly) prefers the status quo $r$ to $p$.

We now introduce the two main objects to be studied in our work: deliberative coalitions and deliberative coalition structures.

\begin{definition}[Deliberative Coalition]\label{def:coalition}
A \emph{deliberative coalition} is a pair $\d=(C, p)$, where
$C\subseteq V$ is a set of agents, $p\in X$ 
is a proposal, and either 
(i)  $p=r$ and $x\not>_v r$ for all $v\in C, x\in X\setminus\{r\}$, or
(ii) $p\neq r$ and $p >_v r$ for all $v\in C$.
We refer to $p$ as the \emph{supported proposal} of~$\d$. 
The set of all deliberative coalitions is denoted by $\DDD$.
\end{definition}
When convenient, we identify a coalition 
$\d = (C,p)$ with its set of agents $C$ 
and write $\d^q := C^q$ for $q\in X$, 
$|\d| := |C|$. 

\begin{remark}
While we allow coalitions that support the status quo,
we require that such coalitions consist of agents who weakly prefer the status quo 
to all other proposals in $X$. We discuss a relaxation
of this constraint in Section~\ref{sec:concl}.
\end{remark}

A partition of the agents into deliberative coalitions 
is called a deliberative coalition structure.

\begin{definition}[Deliberative Coalition Structure]
\label{def:structure}
A {\em deliberative coalition structure} ({\em coalition structure} for short) 
is a set $\D = \{\d_1, \ldots, \d_m\}$, $m \ge 1$, such that: 
\begin{itemize}
    \item $\d_i = (C_i,p_i)\in\DDD$ for each $i \in [m]$;
    \item $\bigcup_{i\in [m]}C_i = V$, $C_i \cap C_j = \emptyset$ for all $i,j \in [m]$ with $i\neq j$.
\end{itemize}
The set of all deliberative coalition structures over $V$ and $X$ is denoted by $\DD$.
\end{definition}

Note that a deliberative coalition structure may contain several coalitions 
supporting the same proposal; also, for technical reasons we allow empty 
deliberative coalitions, i.e., coalitions $(C, p)$ with $C=\emptyset$.

\begin{example}
Consider a set of agents $V = \{v_1, v_2, v_3\}$ and a space of proposals $X = \{a, b, c, d, r\}$,
where $r$ is the status quo. 
Suppose that $X^{v_1} = \{a, b\}$, $X^{v_2} = \{b, c\}$, and $X^{v_3} = \{b, c, d\}$. 
Then for $C = \{v_1, v_2\}$ we have 
$C^a = \{v_1\}$ and $C^b = \{v_1, v_2\}$. 
Furthermore, let $C_1=\{v_1, v_2\}$, $C_2=\{v_3\}$, and let 
$\d_1 = (C_1, b)$, 
$\d_2=(C_2, c)$.
Then $\D = \{\d_1, \d_2\}$ is a deliberative coalition structure;
see Figure~\ref{fig:ex1} for an illustration.

\begin{figure}
	\centering

	\begin{tikzpicture}[scale=1.3]
	
	\node (1) at (0,0) [label=above:$\textcolor{red}{v_1}$] {};
	\fill[red] (1) circle [radius=0.05];
	\draw (0.05,0) circle (1.5cm);
	
	\node (2) at (1,1) [label=above:$\textcolor{red}{v_2}$] {};
	\fill[red] (2) circle [radius=0.05];
	\draw (1,1) circle (1.5cm);
	
	\node (3) at (2,1) [label=above:$\textcolor{red}{v_3}$] {};
	\fill[red] (3) circle [radius=0.05];
	\draw (2,1) circle (1.5cm);
	
	\node (r) at (1.49,-0.4) [label=below:$r$] {};
	\fill (r) circle [radius=0.05];
	
	\draw[dotted] (1)--(r);
	\draw[dotted] (2)--(r);
	\draw[dotted] (3)--(r);
	
	\node (a) at (0,-1) [label=above:$\textcolor{blue}{a}$] {};
	\fill[blue] (a) circle [radius=0.05];
	\node (b) at (1.20,0.45) [label=left:$\textcolor{blue}{b}$] {};
	\fill[blue] (b) circle [radius=0.05];
	\node (c) at (1.5,1.8) [label=above:$\textcolor{blue}{c}$] {};
	\fill[blue] (c) circle [radius=0.05];
	\node (d) at (3,0.5) [label=above:$\textcolor{blue}{d}$] {};
	\fill[blue] (d) circle [radius=0.05];
	
	\end{tikzpicture}
	
	\caption{A deliberation space with $X = \{a, b, c, d, r\}$. The circle with center $v_i$, $i \in [3]$, contains all proposals approved by $v_i$.}
	\label{fig:ex1}
\end{figure}
\end{example}


\subsection{Deliberative Transition Systems}
As suggested above, we model deliberation as a process whereby deliberative coalition structures change 
as their constituent coalitions evolve.
We provide general definitions for deliberative processes, 
modeled as transition systems (Definitions \ref{definition:DTS}--\ref{definition:success}).
Subsequently, we explore several specific kinds of transitions.
Generally, a \emph{transition system} is characterized by a set of \emph{states} $S$, 
a subset $S_0 \subseteq S$ of \emph{initial states}, and a set of {\em transitions} $T$, 
where each transition $t \in T$ is represented by a pair of states $(s, s') \in S\times S$; 
we write $t=(s, s')$ and $s \xrightarrow{t} s'$ interchangeably. 
We use $s \xrightarrow{T} s'$ to denote that $s \xrightarrow{t} s'$ for some $t \in T$.
A \emph{run} of a transition system is a (finite or infinite) sequence  
$s_0 \xrightarrow{T} s_1 \xrightarrow{T} s_2 \cdots$; such a run is \emph{initialized} if $s_0 \in S_0$.
The last state of a finite run is called its {\em terminal state}.

\begin{definition}[Deliberative Transition System]\label{definition:DTS}
A \emph{deliberative transition system} over a set of proposals $X$
and a set of agents $V$ is a transition system 
that has $\DD$ as its set of states, a subset of states $\DD_0 \subseteq \DD$ 
as its set of initial states, 
and a set of transitions~$\mathcal{S}$. 
\end{definition}


Deliberative transition systems describe all possible sequences of deliberative coalition structures that can be obtained by executing a given set of transitions, starting from a given set of initial deliberative coalition structures. As such, they can be used to analyze the dynamics of the deliberation process. 
We define deliberations as maximal runs of such systems.

\begin{definition}[Deliberation]
\label{def:deliberation}
A \emph{deliberation} is a maximal run of a deliberative transition system, 
that is, a run that does not occur as a prefix 
of any other run (i.e., cannot be extended). 
\end{definition}


A \emph{successful} deliberation is one that identifies some of the most popular proposals
in $X\setminus\{r\}$.
In particular, if there is a majority-approved proposal, 
a successful deliberation process allows the agent population to identify some such proposal.
Observe, however, that it may be the case that no alternative is approved
by a majority of voters.

\begin{definition}[Most-Approved Alternatives]
\label{def:most}
For a set of agents $V$ and a set of proposals $X$, 
the set of \emph{most-approved alternatives} is
\linebreak
$M^* = \text{argmax}_{x \in X\setminus\{r\}} |V^x|$ 
and the \emph{maximum approval} is $m^* = \text{max}_{x \in X\setminus\{r\}} |V^x|$.
%
\end{definition}

Note that, as long as all approval sets are non-empty, $M^* \neq \emptyset$ 
and $|V^x| = m^*$ for every $x \in M^*$.
We are now ready to define success in deliberation.

\begin{definition}\label{definition:success}
A coalition $\d = (C,p)$ is \emph{successful} if 
$p\in M^*$
and 
$|C| = m^*$.
A deliberative coalition structure is \emph{successful} if it contains a successful coalition, 
and {\em unsuccessful} otherwise. A deliberation is \emph{successful} if it is finite 
and its terminal coalition structure is successful. 
\end{definition}
Intuitively, a coalition is successful if it consists of all approvers of a most-approved proposal. 
Successful deliberations are maximal finite runs that lead to a successful coalition structure, i.e., 
to a coalition structure that contains some successful coalition.

\begin{example}
Consider a toy example of a deliberative transition system that consists of 
(1) a single initial coalition structure $\D_0 = \{\d_1, \d_2, \d_3\}$,
with 
$\d_1 = (C_1, a)$ where $C_1 = \{v_1, v_2\}$,
$\d_2 = (C_2, b)$ where $C_2 = \{v_3, v_4\}$,
and
$\d_3 = (C_3, c)$ where $C_3 = \{v_5, v_6\}$; and 
(2) the set of transitions $\mathcal S$ only containing a transition $t$ from 
$\D_0$ to $\D_1 = \{\d_4, \d_5\}$,
with 
$\d_4 = (C_4, e)$ where $C_4 = \{v_1, v_2, v_3, v_4, v_5\}$,
$\d_5 = (C_5, f)$ where $C_5 = \{v_6\}$.
Then, assuming that there is no proposal approved by all agents, 
we have that $\D_0 \xrightarrow{t} \D_1$ is a successful deliberation.
\end{example}

\begin{remark}
Since successful deliberations identify proposals that receive the maximum number of 
approvals, a successful deliberative coalition structure identifies a proposal that would be 
a (possibly tied) winner of an election if agents were to submit approval ballots over the 
entire domain $X$ (or, more precisely, $X\setminus\{r\}$---
our model implicitly assumes that no agent approves $r$). 
However, there are two crucial differences between our deliberation model 
and conducting an approval-based election over $X$:
\begin{itemize}

\item First, our model of deliberation is decentralized in the sense that, at every step, agents do not need to report their approval/disapproval of each proposal, as they would have to do under a centralized voting mechanism. 

\item Second, our model is parsimonious in the sense that, at any time during the deliberation, each agent supports just a single proposal in her (possibly infinite) approval set. Thus, the agents do not need to be aware of the entire (possibly infinite) space of proposals $X$, whereas under a voting mechanism they
would have to form an opinion on each proposal in $X$.

\end{itemize}
Our aim is to formulate conditions guaranteeing that such a deliberative process succeeds, i.e., identifies proposals that could arise as outcomes of an approval election on the entire space $X$.
\end{remark}

Importantly, groups of agents may differ with respect to the types of agreements they are able to negotiate. To understand what can be accomplished by groups 
with particular sets of negotiation skills, we consider specific types of transitions; each can be thought of as a \emph{deliberation operator} that might be available to the agents.
For each transition type, we aim to determine if a deliberation
that only uses such transitions is guaranteed to terminate, and, if so,
whether the final coalition structure is guaranteed to be successful. We show that
the answer to this question depends on the underlying metric space:
simple transition rules guarantee success in simple metric spaces, but
may fail in richer spaces.

\section{Single-Agent Transitions}\label{section:singleagent}

The simplest kind of transition we consider is a deviation by a single agent.
As we assume that agents aim to form a successful coalition and are not necessarily 
able to distinguish among approved proposals, it is natural to focus on transitions where
an agent moves from a smaller group to a larger group; of course, this move is only possible
if the agent approves the proposal supported by the larger group.
%

\begin{definition}[Single-Agent Transition] \label{def:single}
A pair $(\D, \D')$ of coalition structures forms a {\em single-agent transition}
if there exist coalitions $\d_1, \d_2\in \D$ and $\d'_1, \d'_2\in \D'$
such that $|\d_2|\ge |\d_1|$, $\D\setminus\{\d_1, \d_2\}=\D'\setminus\{\d'_1, \d'_2\}$, 
and there exists an agent $v\in \d_1$ such that 
$\d'_1=\d_1\setminus\{v\}$, and $\d'_2=\d_2\cup\{v\}$.\footnote{Note that $\d'_1$ 
may be empty; we allow such `trivial' coalitions as it simplifies our definitions.}
We refer to $v$ as the \emph{deviating agent}. 
\end{definition}

Since $\D'$ is a deliberative coalition structure, agent $v$ must
approve the proposal supported by $\d_2$. As a consequence, no agent can deviate
from a coalition that supports $r$ to a coalition that supports some $p\in X\setminus\{r\}$
or vice versa.

Next, we show that a sequence of single-agent transitions necessarily
terminates after polynomially many steps.\footnote{We provide several such results throughout the paper. They establish worst-case guarantees on the length of a deliberation expressed in terms of the number of coalition formation operations. Intuitively, and in line with practice in algorithm analysis, we prefer deliberations to converge after a number of steps that is bounded by a polynomial expression in $n$ in the worst case. Similarly, we would view deliberations that perform a number of steps that is exponential in $n$ to be impractical (see, for instance, Proposition \ref{prop:lex}).}
\begin{proposition}\label{prop:single-terminates}
A deliberation that consists of single-agent transitions 
can have at most $n^2$ transitions.
\end{proposition}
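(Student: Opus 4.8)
The plan is to identify a potential function on coalition structures that strictly increases with every single-agent transition while staying bounded above by $n^2$; the number of transitions in any run is then bounded by the range of this potential. A natural choice is
$$\Phi(\D) = \sum_{i \in [m]} |C_i|^2,$$
the sum of the squared coalition sizes of $\D = \{\d_1, \ldots, \d_m\}$, where $\d_i = (C_i, p_i)$. Since the $C_i$ partition $V$ and each $|C_i| \le n$, the value $\Phi(\D)$ is an integer confined to the interval $[n, n^2]$ (the lower bound holds because $|C_i|^2 \ge |C_i|$, so $\Phi(\D) \ge \sum_i |C_i| = n$).

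First I would exploit the locality of a single-agent transition. By definition, $\D$ and $\D'$ agree outside $\{\d_1, \d_2\}$ and $\{\d'_1, \d'_2\}$, so the change in $\Phi$ comes entirely from the two affected coalitions. Writing $s_1 = |\d_1|$ and $s_2 = |\d_2|$, after the deviating agent $v$ moves we have $|\d'_1| = s_1 - 1$ and $|\d'_2| = s_2 + 1$, whence
$$\Phi(\D') - \Phi(\D) = \big((s_1-1)^2 - s_1^2\big) + \big((s_2+1)^2 - s_2^2\big) = 2(s_2 - s_1) + 2.$$

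The crucial point — and the single place where the definition of a single-agent transition is used — is the requirement $s_2 \ge s_1$, which forces the agent to move to a weakly larger coalition and hence gives $\Phi(\D') - \Phi(\D) \ge 2 > 0$. Thus $\Phi$ strictly increases along any run of single-agent transitions. Since $\Phi$ is integer-valued and bounded between $n$ and $n^2$, it can strictly increase at most $n^2$ times, so a deliberation of single-agent transitions has at most $n^2$ transitions (in fact the stronger bound $(n^2-n)/2$ follows, since the increment is at least $2$). I expect no genuine obstacle; the only detail worth checking is that the potential is insensitive to the empty coalitions the model permits, which is immediate since an empty coalition contributes $0$ to $\Phi$.
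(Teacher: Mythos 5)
Your proof is correct and follows essentially the same route as the paper: the identical potential $\sum_i |C_i|^2$, the same computation showing each single-agent transition increases it by $2(s_2-s_1)+2\ge 2$, and the same $n^2$ upper bound (which is cleanest to justify via $\sum_i |C_i|^2 \le \bigl(\sum_i |C_i|\bigr)^2 = n^2$, rather than just $|C_i|\le n$). Your observation that the increment of at least $2$ actually yields the sharper bound of roughly $n^2/2$ transitions is a nice touch, though the paper settles for $n^2$.
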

\begin{proof}
Given a coalition structure $\D=\{\d_1, \dots, \d_m\}$ such that $\d_i=(C_i, p_i)$
for each $i\in [m]$, let 
\begin{align}
\lambda(\D) & =\sum_{i\in [m]}|C_i|^2. \label{eq:potential}
\end{align}
We will refer to $\lambda(\D)$ as the {\em potential} of $\D$.
Consider a single-agent transition where an agent moves from a coalition
of size $x$ to a coalition of size $y$; note that $1\le x\le y$. This move changes
the potential by $(y+1)^2+(x-1)^2-y^2-x^2=2 + 2(y-x)\ge 2$. 

We claim that for every 
deliberative coalition structure $\D$ over $n$ agents we have $\lambda(\D)\le n^2$.
Indeed, for the coalition structure $\D_0$ where all agents are in one coalition
we have $\lambda(\D_0)=n^2$. 
On the other hand, if a coalition structure contains
two non-empty coalitions $\d_1$, $\d_2$ with $|\d_1|\le |\d_2|$, then the calculation 
above shows that we can increase the potential by moving one 
agent from $\d_1$ to $\d_2$.  Further, every coalition structure can be transformed into $\D_0$ by a sequence of such moves, and hence  $\lambda(\D)\le \lambda(\D_0)\le n^2$ for each $\D\in\DD$.

As every single-agent transition increases the potential by at least $2$, 
and the potential takes values in $\{1, \dots, n^2\}$, the result follows.
\qed\end{proof}

However, a deliberation consisting of single-agent transitions is not necessarily successful, even
for very simple metric spaces. The next example shows that such a deliberation
may fail to identify a majority-approved proposal even if the associated metric space is
the 1D Euclidean space.

\begin{figure}
	\centering
		\begin{tikzpicture}[scale=1.3]
	
	\node (r) at (0,0) [label=below:$r$] {};
	\fill (r) circle [radius=0.05];
	
	\node (a) at (1,0) [label=below:$\textcolor{blue}{a}$,label=above:$\textcolor{red}{v_1,v_2,v_3}$] {};
	\fill[blue] (a) circle [radius=0.05];
	
	\node (b) at (5,0) [label=below:$\textcolor{blue}{b}$,label=above:$\textcolor{red}{v_4,v_5,v_6,v_7}$] {};
	\fill[blue] (b) circle [radius=0.05];
	
	\node (c) at (-1,0) [label=below:$\textcolor{blue}{c}$,label=above:$\textcolor{red}{v_8,v_9,v_{10}}$] {};
	\fill[blue] (c) circle [radius=0.05];
	
	\node (E) at (6,0) {};
	\node (W) at (-2,0) {};	
	
	\draw[->] (r) -- (E);
	\draw[->] (r) -- (W);
	
	\end{tikzpicture}
	\caption{The metric space in Example~\ref{ex:single-bad}.}
	\label{fig:1D}
\end{figure}

\begin{example}\label{ex:single-bad}
Suppose that $X, V\subseteq {\mathbb R}$, and 
$X=\{r, a, b, c\}$ with
$r=0$, $a=1$, $b=5$, $c=-1$. There are three agents $v_1, v_2, v_3$ located at $a$, 
four agents $v_4, v_5, v_6, v_7$ located at $b$, 
and three agents $v_8, v_9, v_{10}$ located at $c$
(see Figure \ref{fig:1D}).
Observe that a majority of the agents prefer $a$ to $r$.
Consider the deliberative coalition structure $\{\d_1, \d_2, \d_3\}$ 
with
$\d_1=(\{v_1, v_2, v_3\}, a)$,
$\d_2=(\{v_4, v_5, v_6, v_7\}, b)$,
$\d_3=(\{v_8, v_9, v_{10}\}, c)$. 
There are no single-agent transitions from this 
coalition structure: in particular, the agents in $\d_2$ do not want
to deviate to $\d_1$ because $|\d_1|<|\d_2|$, and the agents in $\d_1$
do not want to deviate to $\d_2$, because they do not approve $b$.
Note that this argument still applies to any proposal space $X'$ with $X\subseteq X'$;
e.g., we can take $X'=\mathbb R$.
\end{example}

Thus, to ensure success, we need to consider more powerful transitions.

\section{Follow Transitions}\label{section:follow}


Instead of considering moves by a single agent, 
we will now focus on moves by entire coalitions; specifically, 
we consider transitions where all members of a coalition join 
another coalition in supporting 
that coalition's current proposal.

\begin{definition}[Follow Transition]
A pair of coalition structures $(\D, \D')$ forms a {\em follow transition}
if there exist non-empty coalitions $\d_1, \d_2\in \D$ and $\d'_2\in \D'$
such that $\d_1=(C_1, p_1)$, $\d_2=(C_2, p_2)$,
$\D\setminus\{\d_1, \d_2\}=\D'\setminus\{\d'_2\}$, 
and $\d'_2 = (C_1\cup C_2, p_2)$. 
\end{definition}
Note that each follow transition reduces the number of coalitions by one, 
so a deliberation that consists of follow transitions only cannot take more than 
$n-1$ steps. Also, $|\d'_2|^2 = (|\d_1|+|\d_2|)^2>|\d_1|^2+|\d_2|^2$, i.e., 
follow transitions increase the 
potential function $\lambda(\cdot)$ given by Equation~\eqref{eq:potential}. 
This implies the following bound.
\begin{proposition}\label{prop:follow-terminates}
A deliberation that consists of single-agent transitions and follow transitions 
can have at most $n^2$ transitions.
\end{proposition}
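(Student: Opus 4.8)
The plan is to reuse the potential function $\lambda(\D)=\sum_{i}|C_i|^2$ introduced in the proof of Proposition~\ref{prop:single-terminates} and argue that \emph{both} single-agent transitions and follow transitions strictly increase it, so that a combined deliberation cannot take too many steps. First I would recall the two facts already established in the excerpt: (a) for single-agent transitions, moving an agent from a coalition of size $x$ to one of size $y\ge x$ changes the potential by $2+2(y-x)\ge 2$; and (b) for follow transitions, merging coalitions of sizes $|\d_1|,|\d_2|\ge 1$ into one yields $(|\d_1|+|\d_2|)^2 = |\d_1|^2+|\d_2|^2+2|\d_1||\d_2| > |\d_1|^2+|\d_2|^2$, so the potential increases by $2|\d_1||\d_2|\ge 2$. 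Hence every transition of either type increases $\lambda$ by at least $2$.

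Next I would invoke the bound $\lambda(\D)\le n^2$ for every $\D\in\DD$, which was already proved in Proposition~\ref{prop:single-terminates} (the grand coalition maximizes the potential). Since $\lambda$ is integer-valued, lies in $\{1,\dots,n^2\}$, and strictly increases by at least $2$ at each step, the number of transitions in any deliberation using only these two operators is at most $(n^2-1)/2 < n^2$, giving the claimed $n^2$ bound. This is really just a restatement of the monovariant argument from the earlier proposition, now observing that the follow operator is covered by the same monovariant.

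I do not expect any genuine obstacle here, since the heavy lifting — establishing that $\lambda$ is bounded by $n^2$ and is maximized by the grand coalition — was already done for Proposition~\ref{prop:single-terminates}. The only point requiring a moment's care is confirming that follow transitions also increase the potential, but the excerpt's remark ($|\d'_2|^2 = (|\d_1|+|\d_2|)^2>|\d_1|^2+|\d_2|^2$) already records this; I would simply note that both coalitions are required to be non-empty in the definition of a follow transition, so $|\d_1|,|\d_2|\ge 1$ and the increase is strict. Thus the proof reduces to the single sentence: every transition increases a potential that is bounded above by $n^2$ and increases by at least $2$ each step, so there are at most $n^2$ transitions.
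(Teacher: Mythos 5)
Your proof is correct and follows essentially the same route as the paper: the paper's own justification is exactly that follow transitions increase the potential $\lambda(\D)=\sum_i|C_i|^2$ (via $(|\d_1|+|\d_2|)^2>|\d_1|^2+|\d_2|^2$), so the monovariant argument from Proposition~\ref{prop:single-terminates} carries over. Your added observation that each increase is in fact at least $2$ (hence at most $(n^2-1)/2$ steps) is a slight sharpening but does not change the argument.
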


However, in contrast to single-agent transitions, follow transitions are sufficient for successful
deliberation in any subset of the 1D Euclidean space.

\begin{theorem}\label{thm:follow-success-1D}
Consider a deliberation space $(X, V, r, \rho)$, where
$X, V\subseteq {\mathbb R}$ and $\rho(x, y)=|x-y|$.
Then every deliberation that consists of follow transitions only,
or of a combination of follow transitions and single-agent transitions,
is successful.
\end{theorem}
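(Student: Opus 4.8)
The plan is to prove something slightly stronger than the statement: that \emph{every} coalition structure admitting no follow transition is already successful. Since Proposition~\ref{prop:follow-terminates} guarantees that any run using follow and single-agent transitions is finite, and since a deliberation is a maximal run whose terminal state admits no permitted transition (in particular no follow transition), this at once yields the theorem for both the follow-only and the combined regimes. Thus the argument reduces entirely to the structure of follow-terminal states, and the single-agent transitions contribute nothing beyond being absorbed into the termination bound.

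First I would record the one-dimensional geometry. For $p>r$, an agent $v$ approves $p$ iff $v$ lies strictly on $p$'s side of the midpoint, i.e.\ iff $v>\tfrac{p+r}{2}$, and symmetrically for $p<r$. Two consequences matter: (i) if $v$ approves some $p>r$ then $v>r$ (dually on the left), so no agent approves both a proposal above $r$ and one below $r$; and (ii) approval sets nest, namely $r<p\le q$ implies $V^{q}\subseteq V^{p}$, and symmetrically on the left. These are immediate but are exactly what the dynamics will exploit.

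Next I would use nesting to constrain terminal states. Suppose a terminal structure contained two distinct non-empty coalitions $\d_A=(C_A,p_A)$ and $\d_B=(C_B,p_B)$ with $p_A\le p_B$ and both proposals exceeding $r$. By (ii), $C_B\subseteq V^{p_B}\subseteq V^{p_A}$, so every agent of $C_B$ approves $p_A$; hence $\d_B$ may follow $\d_A$, producing the valid coalition $(C_A\cup C_B,p_A)$ and contradicting terminality. Therefore a follow-terminal structure has at most one non-empty coalition whose proposal lies above $r$ and, symmetrically, at most one below $r$ (empty coalitions and status-quo coalitions are irrelevant to support counts). This one-sidedness of the follow dynamics — that the more \emph{exclusive} coalition can always follow the more \emph{inclusive} one — is the only real content of the proof; once it is in place the counting is automatic.

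Finally I would locate the successful coalition. Fix $p^{*}\in M^{*}$; assuming, as usual, that some agent approves some alternative, $V^{p^{*}}\neq\emptyset$, and by the reflection symmetry of the line about $r$ we may take $p^{*}>r$. For any $v\in V^{p^{*}}$ in the terminal structure, $v$ cannot sit in a status-quo coalition (whose members weakly prefer $r$ to everything, whereas $p^{*}>_v r$) nor in a left coalition (by (i), approving $p^{*}$ forces $v>r$), so $v$ belongs to the \emph{unique} right coalition $\d_R=(C_R,p_R)$. Hence $V^{p^{*}}\subseteq C_R\subseteq V^{p_R}$, giving $m^{*}=|V^{p^{*}}|\le|C_R|\le|V^{p_R}|\le m^{*}$, so all inequalities are equalities: $|C_R|=|V^{p_R}|=m^{*}$ and $p_R\in M^{*}$. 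Thus $\d_R$ is a successful coalition, the terminal structure is successful, and the deliberation succeeds.
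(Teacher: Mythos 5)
Your proposal is correct and follows essentially the same route as the paper: both arguments show that in a follow-terminal state the nesting of one-dimensional approval sets forces at most one non-empty coalition on each side of $r$ (since the coalition with the farther proposal could otherwise follow the one with the nearer proposal), and then conclude by observing that all approvers of a most-supported proposal must sit in the unique coalition on its side. Your explicit chain $m^{*}\le|C_R|\le|V^{p_R}|\le m^{*}$ just spells out a step the paper leaves terse; otherwise the two proofs coincide.
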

\begin{proof}
Assume for convenience that $r=0$. 
Consider a deliberation that consists of single-agent transitions and follow transitions.
By Proposition~\ref{prop:follow-terminates} we know that it is finite; let $\D$ be its terminal state. 

Suppose that $\D$ contains two deliberative coalitions $(C_1, p_1)$ and $(C_2, p_2)$ 
with $p_1, p_2\in {\mathbb R}^+$; assume without loss of generality 
that $p_1\le p_2$. Note that $C_1, C_2\subseteq {\mathbb R}^+$: all agents in ${\mathbb R}^-\cup\{0\}$
prefer $r$ to $p_1, p_2$. Furthermore, every agent in $C_2$
approves $p_1$: indeed, if $v\in C_2$ does not approve $p_1$ then 
$|v-r|\le |v-p_1|$, i.e.,  $v\le |v-p_1|$. Since $v, p_1>0$, this would imply $2v\le p_1\le p_2$, in which case
$v$ would not approve $p_2$ either.
Hence there is a follow transition in which
$C_2$ joins $C_1$, a contradiction with $\D$ being a terminal coalition structure.

Thus, $\D$ contains at most one coalition, say, $(C^+, q^+)$, that supports a proposal in ${\mathbb R}^+$;
by the same argument, it also contains at most one coalition, say, $(C^-, q^-)$, that 
supports a proposal in ${\mathbb R}^-$, and at most one coalition, say, $(C^0, r)$, that  
supports~$r$. 

Let $p$ be some proposal in $M^*$, and assume without loss of generality
that $p>0$.
Since agents in ${\mathbb R}^-$ prefer $r$ to $p$, we have $C^-\cap V^{p}=\emptyset$;
also, by definition, all agents in $C^0$ weakly prefer $r$ to $p$. Hence $V^p\subseteq C^+$
and therefore $|C^+|=m^*$.
\qed\end{proof}

If we modify the definition of follow transitions to require $|\d_2|\ge |\d_1|$
(i.e., that the joint proposal of the new coalition is the proposal originally 
supported by the larger of the two coalitions, a seemingly sensible condition), then the proof
of Theorem~\ref{thm:follow-success-1D} no longer goes through. In fact, Example~\ref{ex:single-bad}
illustrates that in this case the transition system may be unable to reach a successful state.\footnote{One might then also ask whether the constraint $|\d_2|\ge |\d_1|$ could be dropped for single-agent transitions (Definition~\ref{def:single}). However, allowing arbitrary moves by single agents (a)~does not capture the idea that agents prefer to be in larger coalitions, and (b)~may lead to cycles of transitions, and hence termination will no longer be guaranteed.}

Unfortunately, Theorem~\ref{thm:follow-success-1D} does not extend beyond one dimension.
The following examples show that in the Euclidean plane a deliberation that only uses
single-agent transitions and follow transitions is not necessarily successful. 

\begin{example}\label{ex:2d}
Consider a space of proposals $\{a, b, p, r\}$ embedded into ${\mathbb R}^2$, 
where~$r$ is located at $(0, 0)$, $p$ is located at $(0, 3)$, $a$ is located
at $(-3, 3)$, and $b$ is located at $(3, 3)$. There are four agents $v_1, v_2, v_3, v_4$
located at $(-3, 3)$, $(-3, 4)$, $(3, 3)$, and $(3, 4)$, respectively
(see Figure~\ref{figure:nofollow}).
Note that all agents prefer $p$ to $r$. 
Consider a deliberative coalition structure $\D = \{\d_1, \d_2\}$, 
where $\d_1=(\{v_1, v_2\}, a)$, $\d_2=(\{v_3, v_4\}, b)$.
Then no agent in $\d_1$ approves $b$, and no agent in $\d_2$
approves $a$, so there are no follow transitions and 
no single-agent transitions from $\D$.
\end{example}

\begin{figure}
	\centering
		\begin{tikzpicture}[scale=1]
	
	\node (r) at (0,0) [label=below:$r$] {};
	\fill (r) circle [radius=0.05];
	
	\node (p) at (0,1) [label=right:$\textcolor{blue}{p}$] {};
	\fill[blue] (p) circle [radius=0.05];
	\node (a) at (-1,1) [label=left:$\textcolor{blue}{a}$] {};
	\fill[blue] (a) circle [radius=0.05];
	\node (b) at (1,1) [label=right:$\textcolor{blue}{b}$] {};
	\fill[blue] (b) circle [radius=0.05];
	
	\node (v1) at (-1,1) [label=above:$\textcolor{red}{v_1}$] {};
	\fill[red] (v1) circle [radius=0.05];
	\node (v2) at (-1,1.7) [label=above:$\textcolor{red}{v_2}$] {};
	\fill[red] (v2) circle [radius=0.05];
	\node (v3) at (1,1) [label=above:$\textcolor{red}{v_3}$] {};
	\fill[red] (v3) circle [radius=0.05];
	\node (v4) at (1,1.7) [label=above:$\textcolor{red}{v_4}$] {};
	\fill[red] (v4) circle [radius=0.05];
	
	\node (W) at (-4,0) {};
	\node (E) at (4,0) {};
	\node (N) at (0,4) {};
	\node (S) at (0,-0.1) {};
	
	\draw[->] (W) -- (E);
	\draw[->] (S) -- (N);
    
    \draw[thick, dotted, color=blue] (1,1.7) ellipse (0.5cm and 1.3cm);
    \draw[thick, dotted, color=blue] (-1,1.7) ellipse (0.5cm and 1.3cm);
	
	\end{tikzpicture}
	\caption{The metric space in Example~\ref{ex:2d}.}
	\label{figure:nofollow}
\end{figure}

\begin{example}\label{ex:circle}
Consider the example in Figure~\ref{fig:circle}, where all proposals other than $r$
as well as all agents are located on a circle with center $r$; the metric is the usual
Euclidean distance in ${\mathbb R}^2$. 
The alternatives $a$, $b$, and $c$ are equidistant from each other, 
and $p$ is located halfway between $a$ and~$b$ (so, in particular, $d(a, p) = d(p, b) = d(p, r)$).
The deliberative coalition structure is $\D = (\d_1, \d_2, \d_3)$, 
where $\d_1=(\{v_1, v_2, v_3\}, a)$, $\d_2=(\{v_4, v_5, v_6\}, b)$, and $\d_3=(\{v_7\}, c)$;
one can verify that each agent approves the proposal supported by her deliberative coalition. 
Observe that more than half of the agents (namely, $v_2, v_3, v_4, v_5$) prefer $p$ to $r$. 
However, there are no follow transitions or single-agent transitions from $\D$:
for every pair of distinct coalitions $\d_i, \d_j$ the agents in $\d_i$ do not approve
the proposal supported by $\d_j$.
\end{example}

\begin{figure}
	\centering
	\begin{tikzpicture}
	
	\node (r) at (0,0) [label=below:$r$] {};
	\fill (r) circle [radius=0.05];
	\draw (0,0) circle (2cm);
	
	\node (c) at (-2,0) [label=left:$\textcolor{blue}{c}$] {};
	\fill[blue] (c) circle [radius=0.05];
	
	\node (p) at (2,0) [label=right:$\textcolor{blue}{p}$] {};
	\fill[blue] (p) circle [radius=0.05];
	
	\node (a) at (1,1.73) [label=right:$\textcolor{blue}{a}$] {};
	\fill[blue] (a) circle [radius=0.05];
	
	\node (v1) at (0,2) [label=below:$\textcolor{red}{v_1}$] {};
	\fill[red] (v1) circle [radius=0.05];
	
	\node (v2) at (1.73,1) [label=right:$\textcolor{red}{v_2}$] {};
	\fill[red] (v2) circle [radius=0.05];
	
	\node (v3) at (1.92,0.5) [label=left:$\textcolor{red}{v_3}$] {};
	\fill[red] (v3) circle [radius=0.05];
	
	\draw[thick, dotted, color=blue, rotate=-30] (0.3,1.7) ellipse (1.5cm and 1cm);
	
	\node (v7) at (-1.92,-0.5) [label=left:$\textcolor{red}{v_7}$] {};
	\fill[red] (v7) circle [radius=0.05];
	
	\draw[thick, dotted, color=blue] (-2.1,-0.3) ellipse (0.5cm and 0.75cm);
	
	\node (v5) at (1.73,-1) [label=right:$\textcolor{red}{v_5}$] {};
	\fill[red] (v5) circle [radius=0.05];
	
	\node (v4) at (1.92,-0.5) [label=left:$\textcolor{red}{v_4}$] {};
	\fill[red] (v4) circle [radius=0.05];
	
	\node (b) at (1,-1.73) [label=above:$\textcolor{blue}{b}$] {};
	\fill[blue] (b) circle [radius=0.05];
	
	\node (v6) at (0,-2) [label=above:$\textcolor{red}{v_6}$] {};
	\fill[red] (v6) circle [radius=0.05];
	
	\draw[thick, dotted, color=blue, rotate=30] (0.3,-1.7) ellipse (1.5cm and 1cm);
	
	\draw[dotted] (a)--(b);
	\draw[dotted] (b)--(c);
	\draw[dotted] (a)--(c);
	
\end{tikzpicture}
	\caption{The metric space in Example \ref{ex:circle}.}
	\label{fig:circle}
\end{figure}


\section{Merge Transitions}\label{section:merge}

So far, we have focused on transitions that did not introduce new proposals.
Examples~\ref{ex:2d} and~\ref{ex:circle}, however, show that new proposals may be necessary 
to reach success: indeed, none of the proposals supported by existing coalitions 
in these examples was approved by a majority of agents. Thus, next we explore transitions
that identify new proposals. 

As a first step, it is natural to relax the constraint in the definition
of the follow transitions that requires the new coalition to 
adopt the proposal of one of the two component coalitions, 
and, instead, allow the agents to identify a new proposal
that is universally acceptable.  

We do not specify how the compromise proposal $p$ is identified. 
We imagine that a new proposal could be
put forward by one
of the agents in $\d_1, \d_2$ or by an external
mediator whose goal is 
to help the agents reach a consensus.
%

\begin{definition}[Merge Transition]
\label{def:merge}
A pair of coalition structures $(\D, \D')$ forms a {\em merge transition}
if there exist non-empty coalitions $\d_1, \d_2\in \D$ and $\d'_2\in \D'$
such that $\d_1=(C_1, p_1)$, $\d_2=(C_2, p_2)$,
$\D\setminus\{\d_1, \d_2\}=\D'\setminus\{\d'_2\}$, 
and $\d'_2 = (C_1\cup C_2, p)$ for some proposal $p$. 
\end{definition}
One can verify that in Example~\ref{ex:2d} the agents have a merge transition 
to the deliberative coalition $(\{v_1, v_2, v_3, v_4\}, p)$. Thus, merge transitions 
are strictly more powerful than follow transitions. 
Moreover, by considering the potential function $\lambda(\cdot)$
defined in Equation \eqref{eq:potential}, one can check that 
a deliberation that consists of single-agent transitions and merge transitions 
can have at most $n^2$ steps. This implies the following bound.
\begin{proposition}\label{prop:merge-terminates}
A deliberation that consists of single-agent transitions and merge transitions 
can have at most $n^2$ transitions.
\end{proposition}

We will now describe a class of metric spaces where merge transitions guarantee convergence, but follow transitions do not.

\paragraph{Trees}
Consider a weighted tree $T$ with a set of vertices $U$, a set of edges $E$, and a weight function $\omega: E\to {\mathbb R}^+$. This tree defines a metric space $(M_T, \rho_T)$, where $M_T=U$, and for a pair of vertices $x, y\in U$ the distance $\rho_T(x, y)$ is the length of the path between $x$ and $y$ in $T$, where the length of an edge $e$ is given by $\omega(e)$.
Consequently, a tree $T$ defines a family of deliberation spaces $(X, V, r, \rho)$, where $X\subseteq M_T$, $V\subseteq M_T$, $r\in X$ and $\rho=\rho_T$. It turns out that for any such deliberation space merge transitions guarantee successful termination. 

\begin{theorem}\label{thm:merge-success-tree}
Consider a deliberation space $(X, V, r, \rho)$ that corresponds
to a tree $T=(U, E, \omega)$ with $r\in U$.
Then every deliberation that consists of merge transitions only,
or of a combination of merge transitions and single-agent transitions, is successful.
\end{theorem}
\begin{proof}
It will be convenient to think of $T$ as a rooted tree with root $r$ and, for each $x\in U$, denote by $T_x$ the subtree of $T$ that is rooted in $x$.

Assume for convenience that $X=M_T$, i.e., every vertex of $T$
is a feasible proposal (later we will show that this assumption is not necessary). Let $p$ be some proposal in $M^*$, and let $q$ be the child of $r$ such that the path from $p$ to $r$ passes through $q$ (it is possible that $p=q$). We claim that $q\in M^*$. Indeed, an agent approves $q$ if and only if her ideal point is located in~$T_q$. On the other hand, if an agent's ideal point is not located in $T_q$, then the path from her ideal point to $p$ passes through the root $r$ and hence she does not approve $p$. Hence, the set of agents who approve $p$ forms a subset of the set of agents who approve $q$, establishing our claim.

This argument also shows that there are exactly $m^*$ agents whose
ideal point is located in $T_q$. Each such agent can only be in a deliberative coalition that supports a proposal in $T_q$: indeed, if an agent's ideal point is in $T_q$, then she does not approve proposals outside of $T_q$.

Now, consider a terminal coalition structure $\D$ and suppose that there does not exist a deliberative coalition of size $m^*$ supporting a proposal in $T_q$. Then, there are multiple deliberative coalitions that support proposals in $T_q$ and consist of agents whose ideal points are in $T_q$. Let $(C, p')$ and $(C, p'')$ be two such coalitions. But then there is a merge transition in which coalition $(C'\cup C'', q)$ forms, a contradiction with $\D$ being a terminal coalition structure.

If $X\neq M_T$, then we can use a similar argument; the only modification is that we have to define $q$ by considering the ancestors of $p$ (other than $r$) that belong to $X$, and, among these, pick one that is closest to $r$. \qed
\end{proof}

Note that, for trees, we do need the full power of merge transitions: 
follow or single-agent transitions are insufficient for successful deliberation. 
This holds even if all edges have the same weight, as demonstrated by the following example.

\begin{example}\label{ex:no-follow}
Consider a tree $T$ with vertex set $U = \{r, u_1, u_2, u_3\}$
and edge set $E = \{\{r, u_1\}, \{u_1, u_2\}, \{u_1, u_3\}\}$;
the weight of each edge is $1$. Let $X=U$, and suppose that there are two agents 
with ideal points at $u_2$ and $u_3$, respectively. 
Consider a coalition structure where each of these agents forms 
a singleton coalition around her ideal point. Note that 
$\rho(u_2, r)=\rho(u_3, r)=\rho(u_2, u_3)$ and hence the agents 
do not approve each other's positions. 
Therefore, there are no follow transitions or single-agent transitions 
from this coalition structure, even though $u_1$ is approved by both agents 
and therefore there is a merge transition where both agents form 
a coalition around $u_1$. 
\end{example}

In contrast, the next example shows that, for Euclidean spaces, merge transitions are insufficient for successful deliberation, even when combined with single-agent transitions.

\begin{example}\label{ex:needcomp}
Consider a modification of Example~\ref{ex:2d}, where we add an
agent $v_5$ at $(-4, 0)$ to $\d_1$ and an agent $v_6$ at $(4, 0)$ to $\d_2$;
let $\D'$ be the resulting coalition structure.
Note that $v_5$ approves $a$, but does not approve $p$, thereby 
preventing a merge transition where agents in $\d_1$ and $\d_2$ 
form a coalition around $p$. 
\end{example}
The same conclusion can be drawn from Example~\ref{ex:circle}. Indeed, in that example each coalition contains an agent who does not approve of any proposals except for the one supported by her coalition ($v_1$ does not approve $p$, $b$ and $c$, $v_6$ does not approve $p$, $a$ and $c$, and $v_7$ does not approve $a$, $b$ and $p$). Hence, no merge transition is possible.


\section{Compromise Transitions}\label{section:compromise}

Examples~\ref{ex:circle} and~\ref{ex:needcomp} illustrate that, to reach a successful outcome, 
coalitions may need to leave some of their members behind when joining forces.
The following definition formalizes this idea. 
%

\begin{definition}[Compromise Transition]
\label{def:compromise}
A pair of coalition structures $(\D, \D')$ forms a {\em compromise transition}
if there exist coalitions $\d_1, \d_2\in \D$, $\d'_1, \d'_2, \d'\in \D'$
and a proposal $p$ 
such that 
$\D\setminus\{\d_1, \d_2\}=\D'\setminus\{\d'_1, \d'_2, \d'\}$, 
$\d_1=(C_1, p_1)$, $\d_2=(C_2, p_2)$,
$\d'=(C_1^p\cup C_2^p, p)$, 
$\d'_1=(C_1\setminus C_1^p, p_1)$, $\d_2=(C_2\setminus C_2^p, p_2)$,
and $|C_1^p\cup C_2^p|>|C_1|, |C_2|$.
%
\end{definition}

Intuitively, under a compromise transition some of the agents in $\d_1$ and $\d_2$
identify a suitable proposal $p$, 
and then those of them who approve $p$
move to form a coalition that supports $p$, leaving the rest of their old friends behind;
a necessary condition for the transition is that the new coalition
should be larger than both $\d_1$ and $\d_2$.

\begin{example}\label{ex:2d-compromise}
In Example~\ref{ex:circle}, there is a compromise transition from $\D$:
agents $v_2$, $v_3$, $v_4$ and $v_5$ form
a coalition around $p$, with $v_1$
and $v_6$ (as well as $v_7$) remaining in singleton coalitions.
Similarly, in Example~\ref{ex:needcomp}
there is a compromise transition from $\D'$
to the coalition structure $\{\d'_1, \d'_2, \d'\}$, 
where $\d'_1 = (\{v_5\}, a)$, $\d'_2 = (\{v_6\}, b)$, and 
$\d'=(\{v_1, v_2, v_3, v_4\}, p)$.
In either case the new coalition has size $4$, so the resulting coalition structure is successful.
\end{example}

Importantly, we assume that all agents in $\d_1$ and $\d_2$ who approve $p$ join the compromise
coalition; indeed, this is what we expect to happen if the agents
myopically optimize the size of their coalition.


An important feature of compromise transitions 
is that they ensure termination. 

\begin{proposition}\label{prop:lex}
A deliberation that consists of compromise transitions 
can have at most $n^n$ transitions.
\end{proposition}
\begin{proof}
Consider a coalition structure $\D=\{\d_1, \dots, \d_m\}$ such that $\d_i=(C_i, p_i)$
for each $i\in [m]$. Assume without loss of generality that $|C_1|\ge\dots\ge|C_m|$ and there exists an $\ell\in [m]$
such that $|C_i|>0$ for $i\in [\ell]$, $|C_i|=0$ for $i=\ell+1, \dots, m$. 
Let $\gamma(\D)=(|C_1|, \dots, |C_\ell|)$.
Note that $\gamma(\D)$ is a non-increasing sequence of positive integers. Given two non-increasing sequences $(a_1, \dots, a_s)$, $(b_1, \dots, b_t)$
of positive integers we write
\begin{align} \label{eq:lexdef}
(a_1, \dots, a_s) <_{\mathrm{lex}} (b_1, \dots, b_t)
\end{align}
if either 
(a) there exists a $j\le \min\{s, t\}$ such that $a_i=b_i$ for all $i<j$ and 
$a_j<b_j$, or 
(b) $s<t$, and $a_i=b_i$ for all $i\in [s]$.
Note that $<_{\mathrm{lex}}$ is a total order on the space of non-increasing
sequences of positive integers.
Now, observe that if $(\D, \D')$ is a compromise transition, 
then for the respective coalitions $\d_1, \d_2, \d'$ we have 
$|\d'|>\max\{|\d_1|, |\d_2|\}$, and hence
\begin{equation}\label{eq:lex}
\gamma(\D)<_{\mathrm{lex}}\gamma(\D').
\end{equation} 
Now, consider a sequence of deliberative coalition structures $\D_1, \D_2, \dots$ 
such that for each $i\ge 1$ the pair $(\D_i, \D_{i+1})$ is a compromise transition. 
Since $<_{\mathrm{lex}}$ is transitive, 
it follows from Equation~\ref{eq:lex} that $\gamma(\D_i)\neq \gamma(\D_j)$ for each $i<j$.
As each $\gamma(\D_i)$ is a sequence of at most $n$ numbers, with each number taking values 
between $1$ and $n$, our claim follows.
\qed\end{proof}

In contrast to the case of
single-agent transitions and follow/merge transitions, we are unable to show
that a deliberation consisting of compromise transitions 
always terminates after polynomially many steps\footnote{
 In a follow-up paper, \citet{followup} have shown that a deliberation consisting
 of compromise transitions may be of length $\Omega(2^{\sqrt{n}/2})$. They have also
 improved the upper bound in Proposition~\ref{prop:lex} to $2^n$ by designing a suitable potential function.}.
We note that a compromise
transition does not necessarily increase the 
potential function $\lambda(\cdot)$ given by Equation~\eqref{eq:potential}: 
e.g., in Example~\ref{ex:needcomp} the transition where $v_1, v_2, v_3, v_4$ 
form a coalition around $p$ does not change the value of $\lambda$

We say that a deliberation space $(X, V, r, \rho)$ 
is a {\em Euclidean deliberation space} if
$X={\mathbb R}^d$, $V\subseteq {\mathbb R}^d$ for some $d\ge 1$,
and $\rho$ is the standard Euclidean metric on ${\mathbb R}^d$.
The main result of this section is that, in every Euclidean
deliberation space, every maximal run of compromise transitions is successful.
To prove it, we need two auxiliary lemmas. 
In what follows, for a coalition structure $\D$,
$|\D|$ denotes the number of non-empty coalitions in $\D$ that do not support $r$.

\begin{lemma}\label{lem:2}
In every deliberation space, a deliberation that consists
of compromise transitions and has a coalition structure $\D$
with $|\D|=2$ as its terminal state is successful. 
\end{lemma}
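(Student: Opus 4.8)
The plan is to exploit the fact that a terminal coalition structure admits \emph{no} compromise transition, and to apply this to a most-supported proposal $p\in M^*$. First I would record the shape of $\D$: since $|\D|=2$, there are exactly two coalitions $\d_1=(C_1,p_1)$ and $\d_2=(C_2,p_2)$ with $p_1,p_2\neq r$, together with (possibly) one or more coalitions supporting $r$. Because every coalition supporting $r$ consists, by Definition~\ref{def:coalition}(i), of agents who weakly prefer $r$ to all alternatives, none of its members approves any proposal $p\neq r$. Hence for any $p\in M^*$ every approver of $p$ lies in $C_1\cup C_2$, which gives the key identity $V^p=C_1^p\cup C_2^p$.

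Next I would fix $p\in M^*$, so that $|V^p|=m^*$, and consider the hypothetical compromise transition built from $\d_1,\d_2$ and this proposal $p$. By the identity above, such a transition would create the coalition $\d'=(C_1^p\cup C_2^p,p)=(V^p,p)$, of size $m^*$. Since $\D$ is terminal, this transition must be unavailable, so the size condition in Definition~\ref{def:compromise} has to fail: it cannot be that both $|V^p|>|C_1|$ and $|V^p|>|C_2|$. Therefore $m^*=|V^p|\le\max\{|C_1|,|C_2|\}$.

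For the reverse bound, I would use that $p_1,p_2\neq r$ forces every agent in $C_i$ to approve $p_i$, so $C_i\subseteq V^{p_i}$ and thus $|C_i|\le|V^{p_i}|\le m^*$ for $i\in\{1,2\}$. Combining the two estimates yields $\max\{|C_1|,|C_2|\}=m^*$; assume without loss of generality that $|C_1|=m^*$. Then $m^*=|C_1|\le|V^{p_1}|\le m^*$ forces $|V^{p_1}|=m^*$, so $p_1\in M^*$, and $C_1\subseteq V^{p_1}$ with matching cardinalities gives $C_1=V^{p_1}$. Hence $\d_1$ is a successful coalition and $\D$ is successful.

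The step I expect to require the most care is the identity $V^p=C_1^p\cup C_2^p$, equivalently $V^p\subseteq C_1\cup C_2$: this is exactly where the constraint on $r$-supporting coalitions (their members weakly prefer $r$ to everything) is essential, and it is what allows a single hypothetical compromise at $p$ to capture \emph{all} of $V^p$ and hence to reach size $m^*$. Once that identity is in place, the rest is a short counting argument driven purely by the termination condition; the only loose ends are degenerate cases (such as an empty $C_1$ or $C_2$), which are harmless since there the same inequalities immediately place the full approver set of $p$ inside the other coalition.
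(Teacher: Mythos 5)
Your proof is correct and follows essentially the same route as the paper's: the crucial identity $V^p=C_1^p\cup C_2^p$ for $p\in M^*$ (using that members of $r$-supporting coalitions approve nothing else), combined with the size condition $|C_1^p\cup C_2^p|>|C_1|,|C_2|$ from Definition~\ref{def:compromise}, which must fail at a terminal state. The only difference is presentational: the paper argues by contradiction from an unsuccessful $\D$, whereas you run the argument directly and spell out why $\max\{|C_1|,|C_2|\}=m^*$ forces one of the two coalitions to be successful, a step the paper leaves implicit.
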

\begin{proof}
 Consider a coalition structure $\D$ with $|\D|=2$, and suppose that $\D$ is not successful.
 Let $\d_1$, $\d_2$ be the two non-empty coalitions in $\D$ that do not support $r$;
 we have $|\d_1|<m^*$, $|\d_2|<m^*$. 
 For each $p\in M^*$ we have $V^p=\d_1^p\cup \d_2^p$ and hence
 $|\d_1^p\cup \d_2^p|=m^* >|\d_1|, |\d_2|$.
 Thus, there exists a compromise transition from $\D$ in which
 agents in $\d_1^p\cup \d_2^p$ form a coalition around $p$. 
\qed
\end{proof}

\begin{figure}
    \centering

\begin{tikzpicture}[scale=1.3]

\node (a) at (2,1) [label=above:$\textcolor{red}{a}$] {};
\node (r) at (2,0) [label=above right:$r$] {};
\node (l) at (4,0.5) [label=right: $\ell$] {};
\node (l') at (4,0) [label=right: $\ell'$] {};
\node (l1) at (4,0.5)  {};
\node (-l1) at (0,-0.5) [label=left: $\ell_1'$] {};
\node (l2) at (4,-0.3)  {};
\node (-l2) at (0,0.3) [label=left: $\ell_2'$] {};

\node at (3,1) {$\textcolor{red}{C_a}$};
\node at (1,-1.5) {$\textcolor{green}{C_b}$};
\node at (3,-1.5) {$\textcolor{blue}{C_c}$};

\node (p1) at (0,-1) {};
\node (p2) at (4,-0.55) {};

\node (p) at (2.1,-0.77) [label= below: $r''$] {};

\draw[dotted, thick] (l1) -- (-l1);
\draw[dotted, thick] (l2) -- (-l2);
\draw[thick] (p1) -- (p2);

\draw (0,0) -- (4,0);
\draw (0,0.5) -- (4,0.5);
\draw[dotted] (a) -- (r);
\draw[dotted] (r) -- (p);

\draw[color=red] (3,1) ellipse (0.75cm and 0.4cm);
\draw[color=green] (1,-1.5) ellipse (0.75cm and 0.4cm);
\draw[color=blue] (3,-1.5) ellipse (0.75cm and 0.4cm);

\fill (r) circle [radius=0.05];
\fill (a) circle [radius=0.05];
\fill (p) circle [radius=0.05];

\draw[gray!50,ultra thick] (1.5,-0.1) arc (211:337:5mm) ;
 
\end{tikzpicture}

    \caption{Proof of Lemma~\ref{lem:rd-amicable}.}
    \label{fig:euclidean}
\end{figure}

\begin{lemma}\label{lem:rd-amicable}
In every Euclidean deliberation space, a deliberation that consists
of compromise transitions and has a coalition structure $\D$
with $|\D|\ge 3$ as its terminal state is successful. 
\end{lemma}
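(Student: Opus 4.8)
The plan is to argue by contradiction: assume the terminal coalition structure $\D$ with $|\D|\ge 3$ is \emph{not} successful and exhibit a compromise transition out of $\D$, contradicting terminality. Place the status quo at the origin, so $r=0$. Three preliminary reductions do most of the work. First, every member $v$ of a coalition $(C,p)$ with $p\neq r$ approves its common proposal, i.e. $\langle v,p\rangle>\tfrac12\|p\|^2>0$, so each such coalition lies in the open half-space $\{x:\langle x,p\rangle>0\}$ whose boundary passes through $r$; and since $X=\mathbb{R}^d$, any agent supporting $r$ must sit exactly at $r$ (taking the proposal $x=v$ would otherwise give $\rho(v,x)=0<\rho(v,r)$), hence such agents belong to no approval region. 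Second, a proposal of the form $r+\varepsilon u$ with $\varepsilon>0$ small has approver set exactly $\{v:\langle v,u\rangle>0\}$, while a general proposal $q$ has $V^q\subseteq\{v:\langle v,q\rangle>0\}$; writing $g(A):=\max_{u}|\{v\in A:\langle v,u\rangle>0\}|$ for the largest number of points of $A$ strictly on one side of a hyperplane through $r$, this gives $m^*=g(V)$ and $\max_p|C_i^p\cup C_j^p|=g(C_i\cup C_j)$ for any two coalitions. Third, $C_i\subseteq V^{p_i}$ yields $|C_i|\le m^*$.

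Next I would fix a direction $u^*$ attaining $m^*=g(V)$ and set $S=\{v\in V:\langle v,u^*\rangle>0\}$, so $|S|=m^*$. The $r$-supporters avoid $S$, so $S$ is partitioned among the non-$r$ coalitions; since non-success means $|C_i|<m^*$ for every $i$, the set $S$ meets at least two coalitions. If $S$ meets \emph{exactly} two coalitions $(C_1,p_1),(C_2,p_2)$, then, just as in Lemma~\ref{lem:2}, the proposal $p=r+\varepsilon u^*$ realising $S$ has all of its approvers inside $C_1\cup C_2$, so $|C_1^p\cup C_2^p|=|S|=m^*>|C_1|,|C_2|$ and a compromise transition exists, contradiction.

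The crux is therefore the case where $S$ meets three or more coalitions, and this is where the Euclidean structure must be used. I would first extract structure from terminality: for the largest coalition $(C_1,p_1)$ and any other $(C_j,p_j)$, the absence of a $\{C_1,C_j\}$-compromise gives $g(C_1\cup C_j)\le|C_1|$; but the direction $w_1:=p_1/\|p_1\|$ already captures all of $C_1$ together with $C_j\cap\{x:\langle x,w_1\rangle>0\}$, so this forces $C_j\subseteq\{x:\langle x,p_1\rangle\le 0\}$. Thus the hyperplane $\{\langle x,p_1\rangle=0\}$ separates $C_1$ (strictly positive) from \emph{every} other coalition. The plan is then to combine these separations over the $\ge 3$ coalitions, together with the fact that each coalition is itself confined to a half-space through $r$, to show that two of the coalitions are pushed into a common open half-space through $r$; their union is then captured by a single proposal, so they admit a compromise, contradicting terminality.

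I expect this last geometric step to be the main obstacle. The purely combinatorial constraints coming from a single direction are too weak: three coalitions may each place one agent in $S$ with every pairwise count on the positive side of $u^*$ equal to the larger coalition's size, so no contradiction follows from counting alone. The contradiction must come from the geometry — specifically from each coalition living inside a half-space through $r$, which prevents three coalitions from simultaneously covering a most-supported half-space while remaining pairwise unmergeable. The cleanest route I foresee is to reduce to the plane, by restricting to a suitable two-dimensional section through $r$ (or, equivalently, working with the angular positions of agents about $r$), where ``half-space through $r$'' becomes ``open semicircle'': there the no-merge constraints with the largest coalition confine the remaining coalitions to a common semicircle, and two of them are then mergeable, yielding the contradiction (and matching the construction of the compromise proposal $p$ in Figure~\ref{fig:euclidean}).
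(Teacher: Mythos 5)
Your overall strategy is the right one and, up to the last step, runs parallel to the paper's: assume the terminal $\D$ with $|\D|\ge 3$ is unsuccessful, reduce approval of proposals near $r$ to membership in open half-spaces through $r$, and use the absence of a compromise involving the largest coalition $(C_1,p_1)$ to push every other coalition into $\{x:\langle x,p_1\rangle\le 0\}$. The preliminary reductions are all correct (including the observation that unsuccessfulness forces $|C_i|<m^*$ for every non-$r$ coalition, and the disposal of the case where the optimal half-space $S$ meets exactly two coalitions). But the step you yourself flag as ``the main obstacle'' is a genuine gap, and the route you sketch does not close it. Confinement of the remaining coalitions to the \emph{closed} half-space $\{\langle x,p_1\rangle\le 0\}$ --- a closed semicircle of angular width exactly $\pi$ after your planar reduction --- is compatible with two coalitions sitting at the two antipodal boundary directions (say $p_1=e_1$, one coalition clustered on the ray through $e_2$, another on the ray through $-e_2$); such a pair lies in no common open half-space through $r$, so no single proposal is approved by their whole union and no merge is available. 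Hence ``confined to a common semicircle, therefore two of them are mergeable'' is false as stated, and counting arguments cannot rescue it, as you correctly suspect.

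The missing idea --- and it is exactly the one the paper's proof supplies --- is to perturb the separating direction. Every agent of $C_1$ approves $p_1$, so $C_1$ lies beyond the perpendicular bisector of the segment $[r,p_1]$ and is therefore contained in $\{x:\langle x,w\rangle>0\}$ not only for $w=p_1$ but for every $w$ in an open cone $W$ of directions around $p_1$ (the paper realizes this by rotating the line through $r$ by a small angle $\pm\alpha$ while keeping $C_a$ on the same side). Your deduction ``no compromise with $C_1$ forces $C_j\subseteq\{x:\langle x,w\rangle\le 0\}$'' applies verbatim to every $w\in W$, so the other coalitions are in fact confined to $\bigcap_{w\in W}\{x:\langle x,w\rangle\le 0\}$, a closed \emph{pointed} convex cone --- in the plane, a wedge of angle strictly less than $\pi$. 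The nonzero points of such a cone all lie in one common open half-space through $r$, so any two of the remaining coalitions (and there are at least two, since $|\D|\ge 3$) can be strictly separated from $r$ by a hyperplane; projecting $r$ onto that hyperplane yields a proposal approved by both coalitions in full, i.e., a merge that is a valid compromise transition. With this perturbation step inserted, your argument goes through and essentially coincides with the paper's proof.
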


\begin{proof}
For the case $d=1$ our claim follows from the proof of Theorem~\ref{thm:follow-success-1D}.
We will now provide a proof for $d=2$; it generalizes straightforwardly to $d>2$.

Fix a coalition structure $\D$ that contains at least three non-successful coalitions none of which support $r$; we will show that $\D$ is not terminal.
Let $\d_a=(C_a, a)$ be a maximum-size coalition in $\D$ that does not support $r$. 
Let $\ell$ be the perpendicular bisector to the $a$--$r$ segment, i.e., 
$\ell$ passes through the middle of the $a$--$r$ segment and is orthogonal to it. 
Note that $\ell$ separates ${\mathbb R}^2$ into two open half-planes,  
so that $r$ lies in one of these 
half-planes, while all points in $C_a$ lie in the other half-plane (see Figure~\ref{fig:euclidean}). 
Let $\ell'$ be the line that passes through~$r$ and is parallel to $\ell$. 
For a positive $\alpha$, let $\ell'_1$ be the line obtained by rotating $\ell'$ about $r$ 
clockwise by $\alpha$, and let $\ell'_2$ be the line obtained by rotating $\ell'$ counterclockwise 
by $\alpha$. The line $\ell'_1$ (respectively, $\ell'_2$) partitions ${\mathbb R}^2$ into open half-planes 
$H_1$ and $H'_1$ (respectively, $H_2$ and $H'_2$). We can choose $\alpha$ to be small enough so that 
$C_a\subset H_1, C_a\subset H_2$ and so that no agent lies on $\ell'_1$ or on $\ell'_2$.

Now, if there exists a coalition $\d_b=(C_b, b)\in \D$, $\d_b\neq \d_a$, $b\neq r$, 
such that $v\in H_1$ or $v\in H_2$ 
for some $v\in C_b$, then $r$ is not in the convex hull of $C_a$ and $v$, and hence 
there is a line that separates $C_a\cup\{v\}$ from $r$; by projecting $r$ onto this line, 
we obtain a proposal $r'$ that is approved by $v$ and all agents in $C_a$.
Thus, there is a compromise transition in which a non-empty subset of agents in $C^{r'}_b$
joins $C_a$ to form a deliberative coalition around $r'$.

Otherwise, $\d_x\subseteq H'_1\cap H'_2$ for all $\d_x\in \D\setminus\{\d_a\}$.
Consider two distinct coalitions $\d_b, \d_c\in \D$ with 
$\d_b=(C_b, b)$, $\d_c=(C_c, c)$ and $b, c\neq r$. 
As $H'_1\cap H'_2$ is bounded by two rays 
that start from $r$, and the angle between these rays is $2\pi-2\alpha<2\pi$, 
there is a line $\ell^*$ that divides ${\mathbb R}^2$ into two open half-planes 
so that $r$ is in one half-plane and the set $C_b\cup \C_c$ is in the other half-plane; thus, all agents in $C_b\cup C_c$ approve the proposal $p$ 
obtained by projecting $r''$ onto~$\ell^*$, and hence there is a merge transition.
\qed\end{proof}

Combining Lemmas~\ref{lem:2} and~\ref{lem:rd-amicable}, we obtain the main result
of this section.

\begin{theorem}\label{thm:success}
  In every Euclidean deliberation space, every maximal run of compromise transitions is a successful deliberation.
\end{theorem}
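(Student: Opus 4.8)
The plan is to reduce the statement to the two lemmas by analysing the terminal state of the run according to how many of its coalitions support a proposal other than $r$. First I would invoke Proposition~\ref{prop:lex} to conclude that any maximal run of compromise transitions is finite, so that it has a well-defined terminal coalition structure $\D$ to which, by maximality, no compromise transition applies. It then suffices to show that $\D$ is successful, and I would do this by a case analysis on $|\D|$, the number of coalitions in $\D$ that do not support $r$. The cases $|\D|=2$ and $|\D|\ge 3$ are exactly Lemma~\ref{lem:2} and Lemma~\ref{lem:rd-amicable}, so all the remaining work lies in the two cases $|\D|\in\{0,1\}$ that the lemmas do not cover.

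For $|\D|=1$ I would argue directly and combinatorially, without using the Euclidean structure. Let $\d=(C,p)$, $p\neq r$, be the unique non-$r$ coalition; every other agent lies in an $r$-supporting coalition and hence, by clause (i) of Definition~\ref{def:coalition}, approves no proposal in $X\setminus\{r\}$. Therefore $V^q\subseteq C$ for every $q\in X\setminus\{r\}$, which gives $m^*\le |C|$, while the fact that each member of $C$ approves $p$ yields $C=V^p$ and hence $|C|=|V^p|\le m^*$. Combining the two inequalities gives $|C|=m^*$ and $p\in M^*$, so $\d$ is successful.

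For $|\D|=0$ every agent sits in an $r$-supporting coalition, so no agent approves any proposal and $m^*=0$. In a Euclidean deliberation space $X={\mathbb R}^d$, an agent $v$ approves nothing only when $v=r$, since otherwise the point $v$ itself satisfies $\rho(v,v)=0<\rho(v,r)$ and would be approved. Thus $|\D|=0$ can occur only in the degenerate instance in which all agents coincide with $r$; under the standing non-triviality assumption that some agent differs from $r$ (equivalently $m^*\ge 1$) this case does not arise, because every non-$r$ agent must belong to a non-$r$ coalition, forcing $|\D|\ge 1$. Hence the case split above is exhaustive and $\D$ is successful in every case.

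The main obstacle is not this final assembly: the geometric heavy lifting has already been carried out inside Lemma~\ref{lem:rd-amicable}, whose proof exhibits a compromise transition (or a merge transition, which is a compromise transition with both residual coalitions empty) whenever at least three unsuccessful non-$r$ coalitions survive. The only genuine subtlety left is to notice that the lemmas say nothing about the low-multiplicity terminal states $|\D|\le 1$, and to verify that in those states success either holds automatically — a single surviving non-$r$ coalition has necessarily already gathered all approvers of a most-supported proposal — or is ruled out as a degenerate all-at-$r$ instance.
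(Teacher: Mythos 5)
Your proof is correct and follows essentially the same route as the paper: establish finiteness via Proposition~\ref{prop:lex}, take the terminal structure $\D$, and split on $|\D|$ using Lemma~\ref{lem:2} for $|\D|=2$ and Lemma~\ref{lem:rd-amicable} for $|\D|\ge 3$. The only difference is that you explicitly verify the cases $|\D|\le 1$, which the paper compresses into the unproved remark that an unsuccessful terminal structure must satisfy $|\D|>1$; your direct argument for those cases (the unique non-$r$ coalition must equal $V^p$ and hence have size $m^*$, and $|\D|=0$ is degenerate in a Euclidean space) is a correct filling-in of that gap rather than a different approach.
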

\begin{proof}
 Consider a maximal run of compromise transitions, and let $\D$ be its terminal state. Suppose for the sake of contradiction that $\D$ is not successful.
 Note that this implies that $|\D|>1$. If $|\D|=2$ there is a transition from $\D$
 by Lemma~\ref{lem:2} and if $|\D|\ge 3$, then there is a transition from $\D$
 by Lemma~\ref{lem:rd-amicable}. 
\qed\end{proof}

For the proof of Lemma~\ref{lem:rd-amicable} to go through, 
the underlying metric space should be sufficiently rich: to obtain the proposal 
approved by the new coalition, we project the status quo $r$ on a certain line.
The argument goes through if we replace ${\mathbb R}^d$
with ${\mathbb Q}^d$; however, it does not extend to the case where $X$ is an arbitrary
finite subset of ${\mathbb R}^d$. 
Intuitively, for deliberation to converge, at least some agents
should be able to spell out nuanced compromise proposals. 

Observe that the compromise transitions in Lemma~\ref{lem:rd-amicable} have a special form:
when two coalitions join forces, at least one of them is fully behind the new proposal.
This motivates the following definition.

\begin{definition}[Subsume Transition]
\label{def:subsume}
A pair of coalition structures $(\D, \D')$ forms a {\em subsume transition}
if there exist coalitions $\d_1, \d_2\in \D$, $\d'_1, \d'\in \D'$
and a proposal $p$ 
such that 
$\D\setminus\{\d_1, \d_2\}=\D'\setminus\{\d'_1, \d'\}$, 
$\d_1=(C_1, p_1)$, $\d_2=(C_2, p_2)$, $C_2^p=C_2$, $C_1^p\neq\emptyset$,
$\d'=(C_1^p\cup C_2, p)$, 
$\d'_1=(C_1\setminus C_1^p, p_1)$, and $|C_1^p\cup C_2|>|C_1|$.
\end{definition}

By construction, every subsume transition is a compromise transition.
Since every merge transition is also a subsume transition (with $\d'_1$ being empty), it follows
that Lemma~\ref{lem:rd-amicable} holds for deliberations 
that consist of subsume transitions.


While subsume transitions by themselves are not sufficient
for successful deliberation in ${\mathbb R}^d$ (as the transition in 
Lemma~\ref{lem:2} is not necessarily a subsume transition), they are nearly sufficient: the proof of Theorem~\ref{thm:success}
shows that we need at most one general compromise transition.
Since every subsume transition 
increases the potential defined by Equation~\eqref{eq:potential}
by 
$$
(|C_2|+|C_1^p|)^2+(|C_1|-|C_1^p|)^2-(|C_1|^2+|C_2|^2)=
2|C_1^p|\cdot(|C_2|-|C_1|+|C_1^p|)\ge 2,
$$
we obtain the following corollary.

\begin{corollary}
For every Euclidean deliberation space 
there exists a successful deliberation that consists of compromise transitions
and has length at most $n^2+1$.
\end{corollary}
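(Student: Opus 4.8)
The plan is to construct one particular successful deliberation rather than rely on an arbitrary maximal run (Theorem~\ref{thm:success} guarantees success but says nothing about length), and to bound its length using the potential $\lambda(\D)=\sum_i|C_i|^2$ from the proof of Proposition~\ref{prop:single-terminates}. The construction splits into a phase of subsume transitions---each of which strictly increases $\lambda$---followed by at most one ordinary compromise transition.

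First I would note that the transition produced in the proof of Lemma~\ref{lem:rd-amicable} is always a \emph{subsume} transition: as observed right before Definition~\ref{def:subsume}, in each of its two cases at least one of the two merging coalitions is fully behind the new proposal. Hence, beginning from the initial structure, as long as the current structure $\D$ is unsuccessful and $|\D|\ge 3$, I may apply such a subsume transition. This phase must halt: every subsume transition raises $\lambda$ by $2(|C_2|-|C_1|+|C_1^p|)\ge 2$, and since $\lambda$ ranges over $\{1,\dots,n^2\}$ the argument of Proposition~\ref{prop:single-terminates} caps the number of subsume transitions by $n^2$.

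It remains to examine the structure $\D$ at which the phase stops, which happens only when $\D$ is already successful or $|\D|\le 2$. The value $|\D|=0$ is impossible once some approval set is non-empty, and $|\D|=1$ forces success: the unique non-$r$ coalition $\d=(C,p)$ must then contain every approver of every proposal (all remaining agents sit in $r$-coalitions and approve nothing in $X\setminus\{r\}$), so $C=V^p$ and $|V^q|\le|C|$ for all $q\neq r$, giving $p\in M^*$ and $|C|=m^*$. The only genuinely unfinished case is thus an unsuccessful $\D$ with $|\D|=2$, where Lemma~\ref{lem:2} provides a single compromise transition to a successful structure. Since subsume transitions are compromise transitions, the whole run uses only compromise transitions and has length at most $n^2+1$.

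The hard part will be the gluing, not any single estimate: one must verify that the transitions furnished by Lemma~\ref{lem:rd-amicable} are subsume (so that the potential increment, and hence the polynomial count, applies) and that the subsume phase cannot stall at $|\D|\ge 3$ before reaching success---both of which follow from Lemma~\ref{lem:rd-amicable} itself---while separately ruling out the degenerate small-$|\D|$ cases as above. Granting these, the length bound is just the potential bookkeeping already carried out for Propositions~\ref{prop:single-terminates} and~\ref{prop:follow-terminates}.
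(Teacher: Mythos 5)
Your proof is correct and follows essentially the same route as the paper's: run subsume transitions (each raising the potential $\lambda$ by at least $2$, hence at most $n^2$ of them), then invoke Lemma~\ref{lem:rd-amicable} to conclude the remaining unsuccessful case has $|\D|=2$, which Lemma~\ref{lem:2} resolves with one final compromise transition. Your explicit treatment of the $|\D|\le 1$ cases is a small addition of detail the paper leaves implicit (via the remark in the proof of Theorem~\ref{thm:success} that an unsuccessful structure has $|\D|>1$), but the argument is the same.
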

\begin{proof}
Suppose that agents perform subsume transitions until no such transitions are available; 
as every subsume transition increases the potential $\lambda(\cdot)$ (see Equation~\ref{eq:potential}), 
this process
ends after at most $n^2$ steps. By Lemma~\ref{lem:rd-amicable}, if the resulting
coalition structure $\D$ is not successful, then $|\D|=2$, in which case,
by Lemma~\ref{lem:2}, there is a compromise transition from $\D$ to a successful
coalition structure. 
\qed
\end{proof}


Given our positive results for Euclidean deliberation spaces, 
it is natural to ask whether compromise 
transitions are sufficient for convergence in other metric spaces.
The following example, however, illustrates that this is not the case
even if $X$ is a finite subset of ${\mathbb R}^d$.

\begin{figure}
	\centering
		\begin{tikzpicture}[scale=1.3]
	
	\node (r) at (0,0) [label=below:$r$] {};
	\fill (r) circle [radius=0.05];
	
	\node (p) at (0,2) [label=right:$\textcolor{blue}{p}$] {};
	\fill[blue] (p) circle [radius=0.05];
	
	\node (e) at (0,3.5) [label=right:$\textcolor{blue}{e}$] {};
	\fill[blue] (e) circle [radius=0.05];
	
	\node (a) at (2,0) [label=below:$\textcolor{blue}{a}$] {};
	\fill[blue] (a) circle [radius=0.05];
	\node (b) at (1.5,0.6) [label=below:$\textcolor{blue}{b}$] {};
	\fill[blue] (b) circle [radius=0.05];
	\node (c) at (-2,0) [label=below:$\textcolor{blue}{c}$] {};
	\fill[blue] (c) circle [radius=0.05];
	\node (d) at (-1.5,-0.6) [label=below:$\textcolor{blue}{d}$] {};
	\fill[blue] (d) circle [radius=0.05];
	
	\node (v1) at (3,0) [label=above:$\textcolor{red}{v_1}$] {};
	\fill[red] (v1) circle [radius=0.05];
	\node (v2) at (2.5,1) [label=above:$\textcolor{red}{v_2}$] {};
	\fill[red] (v2) circle [radius=0.05];
	\node (v3) at (-3,0) [label=above:$\textcolor{red}{v_3}$] {};
	\fill[red] (v3) circle [radius=0.05];
	\node (v4) at (-2.5,-1) [label=above:$\textcolor{red}{v_4}$] {};
	\fill[red] (v4) circle [radius=0.05];
	
	\node (v5) at (2,2) [label=above:$\textcolor{red}{v_5}$] {};
	\fill[red] (v5) circle [radius=0.05];
	\node (v6) at (1.5,2.6) [label=above:$\textcolor{red}{v_6}$] {};
	\fill[red] (v6) circle [radius=0.05];
	\node (v7) at (-2,2) [label=above:$\textcolor{red}{v_7}$] {};
	\fill[red] (v7) circle [radius=0.05];
	\node (v8) at (-1.5,1.4) [label=above:$\textcolor{red}{v_8}$] {};
	\fill[red] (v8) circle [radius=0.05];

	\node (v9) at (0,3) [label=right:$\textcolor{red}{v_9}$] {};
	\fill[red] (v9) circle [radius=0.05];
	
	\node (W) at (-4,0) {};
	\node (E) at (4,0) {};
	\node (N) at (0,4) {};
	\node (S) at (0,-0.1) {};
	\node (SW) at (-3.5,-1.4) {};
	\node (NE) at (3.5,1.4) {};
	
	\node (W1) at (-3,2) {};
	\node (E1) at (3,2) {};
	\node (SW1) at (-2.5,1) {};
	\node (NE1) at (2.5,3) {};

	\draw[->] (W) -- (E);
	\draw[->] (S) -- (N);
	\draw[->] (SW) -- (NE);
	
	\draw[dotted] (W1) -- (E1);
	\draw[dotted] (SW1) -- (NE1);
	\draw[dotted] (a) -- (v5);
	\draw[dotted] (b) -- (v6);
	\draw[dotted] (c) -- (v7);
	\draw[dotted] (d) -- (v8);
	
	\end{tikzpicture}
	\caption{The deliberation space in Example~\ref{ex:nocomp}.}
	\label{fig:nocomp}
\end{figure}

\begin{example}\label{ex:nocomp}
Figure~\ref{fig:nocomp} depicts a deliberation space that is embedded in ${\mathbb R}^3$,
so that $\rho$ is the usual Euclidean metric in ${\mathbb R}^3$, 
and $X=\{r, a, b, c, d, p, e\}$, $V=\{v_1, \dots, v_9\}$ with
\begin{align*}
&r=(0, 0, 0), \\
&a=(2, 0, 0), b=(0, 2, 0), c=(-2, 0, 0), d=(0, -2, 0), \\
&p = (0, 0, 2), e=(0, 0, 3.5),\\
&v_1=(3, 0, 0), v_2=(0, 3, 0), v_3=(-3, 0, 0), v_4=(0, -3, 0), \\
&v_5=(2, 0, 2), v_6=(0, 2, 2), v_7=(-2, 0, 2), v_8=(0, -2, 2), \\
&v_9 = (0, 0, 3).
\end{align*}
Then $\D=\{\d_1, \d_2, \d_3, \d_4, \d_5\}$, where
$\d_1=(\{v_1, v_5\}, a)$, 
$\d_2=(\{v_2, v_6\}, b)$, 
$\d_3=(\{v_3, v_7\}, c)$, 
$\d_4=(\{v_4, v_8\}, d)$, 
$\d_5=(\{v_9\}, e)$ is a deliberative coalition structure, 
and there are no compromise transitions from $\D$.
However, $\D$ is not successful, as agents $v_5, \dots, v_9$ all approve $p$.
\end{example}  

\noindent Thus, for general metric spaces one needs an even richer class of transitions
to identify credible alternatives to the status quo.


\section{Beyond Two-Way Compromises?}\label{section:hypercube}

We closed the previous section by showing that `sparse' subspaces of Euclidean spaces may be challenging even for compromise transitions (Example \ref{ex:nocomp}). In this section we explore other `sparse' spaces, but move away altogether from the Euclidean framework. Specifically, we focus on spaces that arise naturally in decision-making on combinatorial domains \citep{ma-comsoc}: $d$-dimensional hypercubes. These consist of the set $\{0, 1\}^d$ of binary vectors of length $d$ ($d$ being a positive integer), endowed with the discrete metric 
$h$, defined as 
$$
h((x_1, \dots, x_d), (y_1, \dots, y_d))=\sum_{i=1}^d|x_i-y_i|.
$$
This metric is known as the Hamming, or Manhattan, distance. Intuitively, each element of $\{0, 1\}^d$ denotes one possible position on a set of $d$ binary issues. For any positive integer $d$, we refer to the space $(\{0, 1\}^d, h)$ as the \emph{$d$-dimensional hypercube} or simply \emph{$d$-hypercube}.
In what follows, for readability, we will often write vectors in $\{0, 1\}^d$
as strings of 1s and 0s of length $d$: e.g., we will write $0110$
instead of $(0, 1, 1, 0)$. 
Also, without loss of generality we will assume throughout that 
the status quo $r$ is the all-0 vector. 

\subsection{Deliberation on $d$-Hypercubes}

We start with a simple observation:
\begin{proposition}\label{prop:hdone}
In $d$-hypercubes with $d \leq 2$, every deliberation consisting of compromise transitions is successful. 
\end{proposition}
\begin{proof}
For $d = 1$ the claim holds trivially. Indeed, a terminal coalition structure cannot contain
two non-empty coalitions that both support the same proposal $a\neq r$ (as they could then merge
and form a bigger coalition around $a$). Hence, since
the deliberation space only contains one point other than $r=0$, all agents that 
support $1$ must be in the same coalition.

For $d = 2$, assume towards a contradiction that there exists a terminal deliberative 
coalition structure $\D$ that is not successful. Let $p$ be a most approved alternative, 
$p\neq 00$, and suppose that $p$ is approved by $m^*$ agents. 
Note that we can choose $p\neq 11$, as all agents who approve
$11$ also approve both $10$ and $01$; assume without loss of generality that $p=10$.

Since $\D$ cannot contain two non-empty coalitions that support the same proposal $a\neq r$, 
it contains at most three non-empty coalitions that do not support $00$. 
Suppose first that $\D$ contains a deliberative coalition $(C, 11)$ with $C\neq\emptyset$. 
Note that all agents in $C$ have $11$ as their ideal point, as agents whose ideal point 
is $01$ or $10$ do not approve $11$. Since $|C|<m^*$, there 
must exist a coalition $(C', q')$ with $C'\neq\emptyset$ and $q'=01$ or $q'=10$. 
But then there is a follow transition in which $C$ joins $C'$, so $\D$ is be terminal. 

Hence, $\D$ contains two non-empty deliberative coalitions: 
$(C_{10}, 10)$ and $(C_{01}, 01)$. Moreover, $C_{01}$ contains some agents 
whose ideal point is $11$ (as otherwise $C_{10}$ would contain all agents who approve $p=10$, 
a contradiction with $\D$ not being successful). But then there is a compromise transition 
where all agents in $C_{01}$ whose ideal point is $11$ move to $(C_{10}, 10)$, to form a coalition
of size $m^*>|C_{01}|$, a contradiction.
\qed
\end{proof}

Observe that the proof of Proposition~\ref{prop:hdone} holds for subsume
transitions: indeed, the proof shows that if a coalition structure $\D$
is not successful, then there is a subsume transition from it.
However, we will now show that other types of transitions considered in our work do not necessarily result in successful deliberation on $2$-hypercubes.
\begin{remark}
Deliberations that consist of single-agent, follow, or merge transitions may fail to be successful on $2$-dimensional hypercubes. Indeed, suppose that $r=00$, 
$v_1=v_2=10$, $v_3=01$, and $v_4=v_5=11$.
Note that $10$ is approved by agents $v_1$, $v_2$, $v_4$, and $v_5$, 
$01$ is approved by agents $v_3$, $v_4$, and $v_5$, and $11$
is approved by agents $v_4$ and $v_5$, so the most approved
proposal is $10$.
Consider the deliberative coalition structure $(\d_1, \d_2)$, 
where $\d_1$ consists of $v_1$ and $v_2$, and supports $10$, whereas $\d_2$ consists of $v_3$, $v_4$, and $v_5$, and supports $01$. There are no single-agent, follow or merge transitions from this coalition structure, even though $01$ 
is not the most approved project.
\end{remark}


In higher dimensions, however, compromise transitions no longer guarantee success.
\begin{example}[Compromise failure on $3$-hypercube] \label{ex:3hyper}
Consider the $3$-hypercube $\{ 0,1 \}^3$ and let $N$ consist of the following agents: 
$v_1 = 100$, $v_2 = 010$, $v_3 = 110$, $v_4 = 001$, $v_5 = 101$ (see Figure~\ref{fig:3hyper}). 
Let $\D$ be the deliberative coalition structure that contains the following three coalitions: $(\{ v_1 \}, 100)$, $(\{ v_2, v_3 \}, 010)$, and $(\{ v_4, v_5 \}, 001)$. This coalition structure is terminal. It is not successful, however, as $v_1$, $v_3$ and $v_5$ approve $100$. 
\end{example}
Consequently, single-agent, follow, merge and subsume transitions also fail to guarantee success in the above setting. However, a careful inspection of Example~\ref{ex:3hyper} suggests that a successful outcome would be within reach if compromises that involve three coalitions were allowed. This motivates us to study a natural multi-party generalization of compromise transitions.  

\begin{figure}
	\centering
	\begin{tikzpicture}[scale=1.3]
	
	\node (000) at (0,0) [label=below:$000~ r$] {};
	\fill (000) circle [radius=0.05];
	
	\node (011) at (1.5,2.6) [label=above:$011$] {};
	
	\node (010) at (0,2) [label=above:$010~ \textcolor{red}{v_2}$] {};
	\fill[red] (010) circle [radius=0.05];
	
	\node (001) at (1.5,0.6) [label=left:$001~ \textcolor{red}{v_4}$] {};
	\fill[red] (001) circle [radius=0.05];
	
	\node (111) at (3.5, 2.6) [label=right:$111$] {};
	
	\node (110) at (2,2) [label=above:$110~ \textcolor{red}{v_3}$] {};
	\fill[red] (110) circle [radius=0.05];
	
	\node (101) at (3.5,0.6) [label=right:$101~ \textcolor{red}{v_5}$] {};
	\fill[red] (101) circle [radius=0.05];
	
	\node (100) at (2,0) [label=below:$100~ \textcolor{red}{v_1}$] {};
	\fill[red] (100) circle [radius=0.05];
	
	\draw[thick, dotted, color=blue] (2.5,0.7) ellipse (2.5cm and 0.5cm);
	\draw[thick, dotted, color=blue] (1,2.2) ellipse (2.5cm and 0.5cm);
	
    \draw (0,0) -- (0,2) -- (2,2) -- (2,0) -- cycle;
    \draw (1.5,0.6) -- (1.5, 2.6) -- (3.5, 2.6) -- (3.5, 0.6) -- cycle;
    \draw (0,0) -- (1.5,0.6);
    \draw (0,2) -- (1.5, 2.6);
    \draw (2,0) -- (3.5,0.6);
    \draw (2,2) -- (3.5, 2.6);
	
	\end{tikzpicture}
	\caption{The deliberative space in Example~\ref{ex:3hyper}.}
	\label{fig:3hyper}
\end{figure}

\subsection{Multi-Party Compromises}

\begin{definition}[$\ell$-Compromise Transition]
\label{def:dcompromise}
Given an integer $\ell\ge 2$, we say that
a pair of coalition structures $(\D, \D')$ forms an {\em $\ell$-compromise transition}
if there exist coalitions $\d_1, \ldots, \d_\ell \in \D$, $\d'_1, \ldots,  \d'_\ell, \d' \in \D'$
and a proposal $p$ 
such that, for all $1 \leq i \leq \ell$ it holds that
$\D\setminus\{\d_1, \ldots, \d_\ell \} = \D' \setminus \{ \d'_1, \ldots, \d'_\ell, \d' \}$, 
$\d_i = (C_i, p_i)$,
$\d' = (\bigcup_{1 \leq i \leq \ell} C_i^p, p)$, 
$\d'_i = (C_i\setminus C_i^p, p_i)$,
and $|\bigcup_{1 \leq i \leq \ell} C_i^p|>|C_i|$.
\end{definition}
Accordingly, a compromise transition (Definition \ref{def:compromise}) is then a $2$-compromise 
transition. Note that our definition of an $\ell$-compromise does not require
that $C_i^p\neq\emptyset$ for all $i\in [\ell]$, i.e.,  
some coalitions $C_i$ may actually contribute an empty set $C_i^p$ of agents 
to the newly formed coalition (though for $\ell = 2$ this cannot happen). 
Hence, for every $\ell, \ell'$ with $2\le \ell<\ell'\le |\D|$,
it holds that an $\ell$-compromise transition $(\D, \D')$ is also an $\ell'$-compromise transition, as any coalition $C$ not involved in the $\ell$-compromise trivially contributes $\emptyset$ to the newly formed coalition. 

Recall now the total order $<_{\mathrm{lex}}$ defined in Equation~\eqref{eq:lexdef}. 
Observe that if $(\D, \D')$ is an $\ell$-compromise transition with $\ell\ge 2$ then 
$\gamma(\D)<_{\mathrm{lex}}\gamma(\D')$, because the newly formed coalition is strictly 
larger than all coalitions that have contributed to it. Consequently, all coalition structures
that arise in a deliberation that consists of $\ell$-compromise transitions are pairwise distinct,  
and hence the length of such deliberation is bounded by $n^n$.

\begin{corollary}\label{cor:lex}
For each $\ell\ge 2$, a deliberation that consists of $\ell$-compromise transitions can have at most $n^n$ transitions.
\end{corollary}

At the extreme, if the agents can engage in $n$-compromises, 
then a successful coalition structure can be reached in one step, by having all agents who approve a proposal $p\in M^*$ form a coalition around $p$. However, we are interested in outcomes that can be achieved when each interaction only involves a few coalitions. Thus, given a deliberation space, we are interested in finding the smallest value of $\ell$ such that $\ell$-compromise transitions can reach a successful outcome. Formally, let $\ell^*(d)$
be the smallest value of $\ell$ such that in a $d$-hypercube a deliberation that consists of $\ell$-compromise transitions
is necessarily successful.

We will now establish some lower and upper bounds on $\ell^*(d)$,
which hold for all $d\ge 3$. The proofs are rather technical, so they are 
relegated to Appendix~\ref{sec:proofs}. We note that, after the conference version
of our paper has been published, \citet{followup} 
strengthened the lower bound on $\ell^*(d)$ (Proposition~\ref{prop:hyper}) to $2^{\Theta(d)}$.

\begin{proposition} \label{prop:hyper}
For each $d\ge 3$ we have $\ell^*(d)\ge d$.
\end{proposition}

It is instructive to contrast Proposition~\ref{prop:hyper} with the earlier 
Theorem~\ref{thm:success}: While $d$-Euclidean spaces are `rich' enough for $2$-compromises 
to guarantee success in any dimension, the $d$-hypercubes are `sparse',  
so we need to facilitate interactions among at least $d$ coalitions 
to guarantee success.

For the upper bound, an easy observation
is that $\ell^*(d)\le 2^d$: this is because
in a terminal coalition structure
there cannot be two distinct coalitions supporting the same proposal.
We will now show how to strengthen this bound by (almost) a factor of two. 

\begin{proposition} \label{prop:hyper-upper}
For each $d\ge 2$ we have $\ell^*(d) \le 2^{d-1}+\frac{d+1}{2}$.
\end{proposition}

In what follows, we compute the exact value of $\ell^*(d)$ for $d=3$ and $d=4$. For $d=3$, 
the lower bound of Proposition~\ref{prop:hyper} turns out to be tight; however, for $d=4$ 
this is no longer the case.

\begin{proposition}\label{prop:34hyper}
We have $\ell^*(3)=3$ and $\ell^*(4)=5$.
\end{proposition}

While the lower bound shown by \citet{followup} and the upper bound 
of Proposition~\ref{prop:hyper-upper} are quite close, it remains an open
problem to fully close the gap between them.


\section{Conclusions and Future Work}\label{sec:concl}

We proposed a formal model of deliberation for agent populations forming coalitions around proposals in order to change the status quo. 
We identified several natural modes of coalition formation, capturing the many coalitional effects that deliberative processes could support. We studied sufficient conditions for them to succeed in identifying maximally approved proposals.
We intend our model as a foundation for the study of mechanisms and systems  allowing communities to self-govern.

\smallskip

To the best of our knowledge, ours is the first model of deliberation focusing on iterative processes of coalition formation. The model lends itself to several avenues for future research.

As we have mentioned earlier in the paper, several technical challenges left open by 
our work, such as upper and lower bounds on the length of a sequence of compromise 
transitions and the values of $\ell^*(d)$, have been addressed by the follow-up work of 
\citet{followup}. However, there are many conceptual questions that have not been answered 
yet.

In particular, one can consider natural extensions of the model, such as other transition types or other types of metric spaces. Specifically,
for each type of deliberation (as captured by allowable transitions),  
it is important to understand exactly how `rich' a space should be in order to support deliberation success. 
One can also ask whether our positive results are preserved if we impose additional conditions on the structure of the deliberative process,
e.g., require new proposals to be `close' to the original proposals.
We may also revisit our approach to modeling coalitions that support the status quo:
we now assume that each agent~$v$ with $X^v\neq\emptyset$
is capable of identifying some proposal in~$X^v$, and this assumption may be too
strong for many deliberation scenarios. It is perhaps more realistic to assume
that some agents start out by supporting the status quo, but then they learn about
a new proposal $p$ that they prefer to the status quo by observing a coalition
that supports $p$, and subsequently move to join this coalition. A formal model that can capture this form of preference discovery is a challenge for future work.

Further afield, it would be interesting
to consider a stochastic variant of our model, in which each transition from a state
is assigned a certain probability. 
A Markovian analysis of such systems 
might shed further light on the behavior of deliberative processes.

Crucially, in our current model agents are cooperative and non-strategic: they share a 
motivation to form larger coalitions and they truthfully reveal whether they approve a given 
proposal. Yet, agents may prefer to form a smaller group around proposals they strongly prefer 
to being in a larger coalition that supports a proposal they are lukewarm about. 
Furthermore, agents may be strategic about which proposals to support. 
This calls for exploring game-theoretic extensions of our 
framework. 

Yet another ambitious direction for future work is to design a practical tool for deliberation
and self-governance of an online community that incorporates the insights from our analysis. 
Such a tool could take the form of an AI bot over existing on-line deliberation platforms 
such as the LiquidFeedback and Polis platforms mentioned earlier in the paper.
The bot would suggest proposals to agents 
in order to engender compromise, hopefully fostering successful deliberations.


\appendix

\section{Auxiliary Notation and Proofs for Section~\ref{section:hypercube}}
\label{sec:proofs}

In what follows, given a proposal 
$p = (i_1, \dots, i_d)\in\{0, 1\}^d$ and a coalition structure $\D$, 
we denote the deliberative coalition in $\D$ that supports $p$ by $\d_p=(C_p, p)$; also, if an agent's ideal point is $p$, we will say that this agent is of {\em type} $p$. We will refer to the quantity $i_1+\dots+i_d$ as the {\em weight} of proposal $p$.

\begin{proof}[Proposition~\ref{prop:hyper}]
Fix some $d\ge 3$. We prove the claim by constructing a coalition structure in the $d$-hypercube that is terminal for $(d-1)$-compromise transitions, but not successful. The construction generalizes Example~\ref{ex:3hyper}. 

For each $i\in [d]$, let $e_i$ be the point in the hypercube that has $1$ in the $i$-th coordinate and $0$ in all other coordinates. Also, for each $i\in [d]\setminus\{1\}$, let $f_i$ be the point in the hypercube that has $1$ in coordinates $1$ and $i$, and $0$ in all other coordinates.

We create one agent of type $e_1$, and place her into a singleton coalition that supports 
$e_1$. For each $i=2, \dots, d$ we create $d-2$ agents of type $e_i$ and one agent of type 
$f_i$, and place them into a coalition that supports $e_i$; this coalition has $d-1$ members. 
Let $\D$ be the resulting coalition structure.

Note that agents of type $e_i$, $i\in [d]$, do not approve any proposals other than $e_i$. An 
agent of type $f_i$ for some $i\in [d]\setminus\{1\}$ approves $f_i$, $e_1$ and $e_i$. Hence, 
$e_1$ is the most-approved proposal, as it is approved by $d$ agents. However, there are no 
$\ell$-compromise transitions from $\D$ for $\ell <d$: unless each coalition 
in $\D$ contributes an agent to a coalition around $e_1$,  
the resulting coalition would have at most $d-1$ members, and hence it would not be 
attractive to agents of type $f_i$ with $i>1$.
\qed
\end{proof}

The proofs of Propositions~\ref{prop:hyper-upper} and~\ref{prop:34hyper} make
use of the two lemmas stated and proved below.

\begin{lemma}\label{lem:hyper}
In every deliberation space, every deliberation that consists
of $\ell$-compromise transitions and terminates in a coalition structure $\D$ with $|\D| = \ell$, is successful. 
\end{lemma}
\begin{proof}
 Consider a coalition $\D$ with $|\D|= \ell$ that is not successful. 
 Suppose first that $\D$ does not contain a coalition that supports $r$.
 Let $\d_1$, \ldots, $\d_\ell$ be the $\ell$ coalitions in $\D$. We
 have $|\d_i|<m^*$ for $i\in[\ell]$. For each $p \in M^*$ we have 
$V^p = \bigcup_{i\in[\ell]} \d_i^p$ and hence
 $|\bigcup_{i\in[\ell]} \d_i^p|=m^* >|\d_i|$. Thus, by Definition~\ref{def:dcompromise}, 
 there exists an $\ell$-compromise transition from $\D$ in which 
 agents in $\bigcup_{i\in [\ell]} \d_i^p$ form a coalition around $p$. 
If $\D$ contains a coalition that supports $r$, the same argument applies with $\ell-1$ coalitions. 
\qed
\end{proof}

\begin{lemma}\label{lem:uniquecoal}
Let $\D$ be a terminal coalition structure in the $d$-hypercube
with respect to $\ell$-compromises for some $d\ge 2$, $\ell\ge 2$.
Then for each point $p$ in the $d$-hypercube it holds that~$\D$ contains at most one deliberative coalition that contains
agents of type $p$.
\end{lemma}
\begin{proof}
Suppose $\D$ contains two deliberative coalitions $\d_1$, $\d_2$
that contain agents of type $p$. We can assume without
loss of generality that $|\d_1|\ge |\d_2|$. But then the agents of type $p$ approve the proposal supported by $\d_1$, so the agents
of type $p$ in $\d_2$ have an incentive to move to $\d_1$,
a contradiction with $\D$ being terminal.
\qed
\end{proof}

\begin{proof}[Proposition~\ref{prop:hyper-upper}]
The claim is clearly true for $d=2$, so we can assume $d\ge 3$.
Consider a deliberative coalition structure $\D$ that does not admit any $\ell$-transitions 
for $\ell=2^{d-1}+\frac{d+1}{2}$. We say that 
a coalition $\d=(C, p)$ in $\D$ is {\em pure} if all agents 
in $C$ have the same type. We claim that $\D$ can contain
at most $d+1$ pure coalitions. 

Indeed, suppose that $\D$ contains $d+2$ pure coalitions.
By the pigeonhole principle, there are two pure coalitions
$\d=(C, p)$ and $\d'=(C', p')$ in $\D$ such that all agents in $\d$ are of type $i_1i_2\dots i_d$, all agents in $\d'$ are of type $j_1j_2\dots j_d$,
and there exists an $s\in [d]$ such that $i_s=j_s=1$. 
Let $e_s$ be a point in the hypercube that has $1$ in position $s$ and $0$ in all other positions. Then
there is a transition in which all agents in $C$, all agents in $C'$ and all agents currently supporting $s$ form a coalition that supports $s$; this is a contradiction, as this transition only involves $3$ coalitions, and $2^{d-1}+\frac{d+1}{2}>3$ for $d\ge 3$.

Thus, all but $d+1$ coalitions in $\D$ are not pure, 
and hence contain agents of at least two different types. 
Further, by Lemma~\ref{lem:uniquecoal}, for each $p\in \{0, 1\}^d$ 
it cannot be the case that two coalitions in $\D$ contain agents of type $p$.
Hence, if $\D$ contains $t$ pure coalitions, it
contains at most $(2^{d}-t)/2$ non-pure coalitions, 
i.e., at most $2^{d-1}+t/2$ coalitions altogether;
as $t\le d+1$, our bound on $\ell^*(d)$ now follows from Lemma~\ref{lem:hyper}.
\qed
\end{proof}

\begin{proof}[Proposition~\ref{prop:34hyper}]
We partition the proof into four claims.
\begin{claim}
$\ell^*(3)\ge 3$.
\end{claim}
\begin{proof}
This follows immediately from Proposition~\ref{prop:hyper}.
\end{proof}
\begin{claim}
$\ell^*(3)\le 3$.
\end{claim}
\begin{proof}
Let $\D$ be a terminal coalition structure, and suppose that $\D$ is not successful.

Let ${\mathcal C}=\{C_{110}, C_{101}, C_{011}, C_{111}\}$.
Note that all agents in $C_{110}$ are of type $110$ or $111$, 
and similarly for $C_{101}$ and $C_{011}$. On the other hand, $C_{111}$ may contain agents of type $111$, $110$, $101$ or $011$.

Suppose first that at least two of the coalitions 
in $\mathcal C$ are not empty.
Then these two coalitions have a merge transition (which is also a $3$-compromise transition) in which they form a coalition around $111$, a contradiction with~$\D$ being a terminal coalition structure.
On the other hand, if all coalitions in $\mathcal C$ are empty, then $|\D|=3$, and we obtain a contradiction by Lemma~\ref{lem:hyper}.

Thus, we can focus on the case where exactly one coalition in $\mathcal C$---say, $C$---is non-empty.
Lemma~\ref{lem:hyper} then implies that each of the coalitions $C_{001}$, $C_{010}$, and~$C_{100}$ is non-empty.
As argued above, if there is an agent in $C$ of type $(i, j, k)$ then $i+j+k\ge 2$.
Now, if $C$ contains no agent of type $110$, 
then all agents in $C$ approve $001$ and hence there is a follow transition where $C$ joins $C_{001}$, a contradiction with $\D$ being a terminal coalition. Similarly, if $C$ contains no agent of type $101$, then $C$ can join $C_{010}$,
and if $C$ contains no agent of type $011$,
then $C$ can join $C_{100}$. We conclude that $C$ contains 
agents of types $110$, $101$, and $011$; hence, it has to be the case that $C$ supports $111$, 
i.e., $C=C_{111}$.

Now, suppose that $|C_{111}|\le |C_{ijk}|$ for some $i, j, k$ with $i+j+k=1$; assume without loss of generality that $|C_{111}|\le |C_{001}|$. We have already argued that $C_{111}$
contains some agents of type $011$;
note that these agents approve $001$.
Hence there is a subsume transition in which all agents in $C_{111}$ who approve $001$ move to $001$,
a contradiction with $\D$ being a terminal coalition structure.

The remaining possibility is that
$|C_{111}|> \max\{|C_{001}|, |C_{010}|, |C_{100}|\}$. 
In this case, if for some $i, j, k$ with $i+j+k=1$ 
the coalition $C_{ijk}$ contained some agents who approved $111$, then there would be a subsume transition in which all these agents would move to $111$.
Thus $C_{111}$ already contains all agents who approve $111$.
As we assume that $\D$ is not successful, we have $|C_{111}|<m^*$ and hence $111\not\in M^*$. It follows that $110$,
$101$, and $011$ are not in $M^*$ either, as every agent who approves one of these proposals also approves $111$.
Moreover, as $C_{111}$ contains all agents that approve $111$, 
each coalition $C_{ijk}$ with $i+j+k=1$ consists exclusively of agents of type $(i, j, k)$. 
Thus, if, say, $100$ is in $M^*$, then all agents who approve $100$ are either 
in $C_{100}$ or in $C_{111}$, and hence there is a follow transition in which all these agents join forces around $100$, forming a coalition of size $m^*$, a contradiction with $\D$ being terminal.
\end{proof}

\begin{claim}
$\ell^*(4)\ge 5$.
\end{claim}
\begin{proof}
Consider a coalition structure 
$\D = \{\d_{1000}, \d_{0100}, \d_{0010}, \d_{0001}, \d_{1110}\}$,
where 
\begin{itemize}
\item 
$C_{1000}$ contains $21$ agents of type $1000$ and $10$ agents of type $1001$;
\item
$C_{0100}$ contains $21$ agents of type $0100$ and $10$ agents of type $0101$;
\item
$C_{0010}$ contains $21$ agents of type $0010$ and $10$ agents of type $0011$;
\item
$C_{0001}$ contains one agent of type $0001$; and
\item
$C_{1110}$ contains
$30$ agents of type $1101$,
$30$ agents of type $1011$,
$30$ agents of type $0111$,
$10$ agents of type $1100$,
$10$ agents of type $1010$, and
$10$ agents of type $0110$.
\end{itemize}
One can verify that each agent approves the proposal supported
by her deliberative coalition.

There are $121$ agents who approve $0001$: this includes the one
agent in $C_{0001}$, $10$ agents from each of the coalitions
$C_{1000}$, $C_{0100}$ and $C_{0001}$, and $90$ agents from $C_{1110}$. 
For these agents to gather at $0001$, all five coalitions need to be involved. In particular, it cannot be the case that only the agents of types $1001$, $0101$ and $0011$ move to $0001$, as this would result in a coalition of size 
$31$, which is equal to the size of their current coalitions.

It remains to argue that $\D$ is stable with respect to transitions that involve at most four coalitions. The argument in the previous paragraph already shows that every such transition should involve agents in 
$C_{1110}$. 

For each $s=1, 2, 3, 4$ we will argue that there is no transition
in which agents move to a point of weight $s$.

\begin{itemize}
\item
$s=4$:
It is immediate that there is no transition to the unique 
point of weight $4$, i.e., $1111$, as all agents who approve $1111$
are currently in $C_{1110}$.
\item
$s=3$:
No agent not currently in $C_{1110}$ approves $1110$,
so there is no transition in which some agents move to $1110$.
Now, consider a point of weight $3$ that differs from $1110$;
for concreteness, take $0111$ (the remaining two cases can be analyzed in the same way). 
There are $100$ agents 
in $C_{1110}$ that approve $0111$, as well as $10$ agents in $C_{0100}$
and $10$ agents in $C_{0011}$. Thus, the coalition at $0111$
would contain $120$ agents, which is equal to the size of $C_{1110}$.
Therefore, agents in $C_{1110}$ would not benefit from this transition.
\item
$s=2$:
For every proposal of weight $2$, there are at most $60$ 
agents in $C_{1110}$ who approve this proposal and 
at most $10$ other agents who approve it; as $|C_{1110}|>70$,
there is no transition to a point of weight $2$ that
is attractive to agents in $C_{1110}$.
\item
$s=1$:
There are $80$ agents in $C_{1110}$ who approve $1000$;
these agents could move to $1000$, but this would result 
in a coalition of size $111$, which is smaller than 
their current coalition. Hence, a transition in which agents
who approve $1000$ move to $1000$ is not possible; for a similar
reason, there is no transition where agents who approve $0100$
or $0010$ move to the respective points.
\end{itemize}
\end{proof}

\bigskip

\begin{claim}
$\ell^*(4)\le 5$.
\end{claim}
\begin{proof}
Let $\D$ be a terminal coalition structure with respect to $5$-compromise transitions, 
and suppose for the sake of contradiction that $\D$ is not successful.
By Lemma~\ref{lem:hyper}, $\D$ contains at least six coalitions.

Consider an agent who approves $1111$. 
Then her ideal point has weight $3$ or $4$. Consequently, 
this agent approves all proposals of weight $3$ or $4$: indeed,
her distance to $0000$ is at least $3$, while her distance to any proposal
of weight at least $3$ is at most $2$. It follows that if $\D$
contains a coalition $\d_{1111}= (C_{1111}, 1111)$ then it does
not contain any coalitions that support a proposal of weight $3$:
indeed, the ideal point of each agent in $C_{1111}$ has weight $3$
or $4$, so if there is another coalition $C'$ that supports a proposal
$p$ of weight $3$, then there is a follow transition in which $C_{1111}$
and $C'$ merge around $p$. 

Furthermore, $\D$ can contain at most two coalitions that support proposals
of weight at least~$3$. Indeed, assume for the sake of contradiction that there are three
coalitions in $\D$ that support proposals of weight at least $3$. As argued
in the previous paragraph, it cannot be the case that one of them supports $1111$, so we can assume without loss of generality that 
$\D$ contains coalitions 
$\d_{1110}= (C_{1110}, 1110)$, 
$\d_{1101}= (C_{1101}, 1101)$, and 
$\d_{1011}= (C_{1011}, 1011)$, 
with $|C_{1110}|\ge |C_{1101}|\ge |C_{1011}|$. Then $C_{1011}$ can only contain agents of type $0011$: 
all other agents who approve $1011$ also
approve either $1110$ or $1101$, so they would have
an incentive to move to the weakly larger coalitions that support these proposals.
Similarly, $C_{1101}$ can only contain agents of type
$1001$ or $0101$, as all other agents
who approve $1101$ also approve $1110$, so they can
move to the weakly larger coalition $C_{1110}$.
But then all agents in $C_{1101}$ and $C_{1011}$ approve $0001$,
so there is a $3$-compromise in which all agents in $C_{1101}$,
$C_{1011}$ and $C_{0001}$ merge around $0001$, a contradiction.

Thus, we have argued that $\D$ contains at most two coalitions that support
proposals of weight $3$ or higher. Now, consider coalitions in $\D$ that support proposals of weight at most~$2$.

Suppose $\D$ contains a coalition $\d_{1100}= (C_{1100}, 1100)$.
Note that all agents in $C_{1100}$ approve $1000$ and $0100$.
Thus, if $\D$ contains a coalition that supports $1000$ 
or $0100$, then there would be a follow transition in which
all agents in $C_{1100}$ adopt the proposal of this coalition, a contradiction with $\D$ being a terminal coalition structure. Similarly, if $\D$ contains a coalition $\d_{1010}$ or $\d_{1001}$, 
there would be a merge transition in which
$C_{1100}$ merges with this coalition around $1000$, and if $\D$ contains a coalition $\d_{0110}$ or $\d_{0101}$, 
there would be a merge transition in which
$C_{1100}$ merges with this coalition around $0100$.

We conclude that if $\D$ contains $\d_{1100}$ then it contains
at most two other coalitions that support proposals of weight at most 
$2$: namely, it may contain $\d_{0011}$ (in which case it contains
no coalitions that support a proposal of weight $1$), or it may contain
one or both of the coalitions $\d_{0010}$ and $\d_{0001}$.
By symmetry, it follows that if $\D$ contains a coalition that 
supports a proposal of weight $2$, then it contains at most
three coalitions that support proposals of weight at most $2$.
As we have argued that $\D$ contains at most two proposals of weight 
at least $3$, we obtain a contradiction with $|\D|\ge 6$.
Hence, we may assume that $\D$ contains no coalitions that support
a proposal of weight exactly $2$.

Thus, if $|\D|\ge 6$, it must be the case that $\D$ contains four coalitions supporting proposals of weight $1$, as well as two coalitions supporting
proposals of weight at least $3$. We have argued that among these two coalitions, one (a weakly smaller one) only contains agents of two types
of weight $2$ each; moreover, our analysis shows that all agents in this coalition approve some proposal $p$ of weight $1$. But then there is a follow transition in which this coalition joins the coalition that currently
supports $p$, a contradiction again. Thus, we can conclude that $\D$ is successful.
\end{proof}
Proposition~\ref{prop:34hyper} now follows by combining Claims 1--4.
\qed
\end{proof}


\bibliographystyle{plainnat}

\end{document}